\pgfplotsset{compat=newest}
\definecolor{A}{RGB}{223,189,217}
\definecolor{B}{RGB}{223,179,217}
\definecolor{C}{RGB}{213,159,217}
\definecolor{D}{RGB}{174,134,202}
\definecolor{E}{RGB}{174,124,202}
\definecolor{F}{RGB}{174,114,202}
\definecolor{G}{RGB}{168,114,203}
\definecolor{H}{RGB}{148,114,203}
\definecolor{I}{RGB}{122,109,198}
\definecolor{J}{RGB}{102,99,188}
\definecolor{K}{RGB}{81,88,179}
\definecolor{L}{RGB}{71,70,160}
\definecolor{M}{RGB}{51,60,154}
\definecolor{N}{RGB}{35,50,124}
\definecolor{O}{RGB}{26,50,104}
\definecolor{P}{RGB}{1,1,1}
\definecolor{lilla}{HTML}{750787}
\definecolor{mycolor1}{rgb}{0.00000,0.44700,0.74100}%
\newtheorem{theorem}{\bf Theorem} 
\newtheorem{definition}{\bf Definition} 
\newtheorem{lemma}{\bf Lemma} 
\newtheorem{remark}{\bf Remark}
\newtheorem{assumption}{\bf Assumption} 
\newtheorem{example}{\bf Example}
\begin{document}
\title{
Provably robust verification of dissipativity properties from data
}

\author{Anne~Koch, %
				Julian~Berberich, %
        and~Frank~Allg\"ower%
\thanks{A Koch, J Berberich and F Allg\"ower are with the Institute for Systems Theory and Automatic Control, University of Stuttgart. 
This work was funded by Deutsche Forschungsgemeinschaft (DFG, German
Research Foundation) under Germany’s Excellence Strategy - EXC 2075 -
390740016. The authors thank the International Max Planck Research School
for Intelligent Systems (IMPRS-IS) for supporting Anne Koch and Julian Berberich. 
E-mail: {\tt\small$\{$anne.koch, julian.berberich, frank.allgower$\}$@ist.uni-stuttgart.de}}}%

\maketitle

\begin{abstract}
Dissipativity properties have proven to be very valuable for systems analysis and controller design. With the rising amount of available data, there has therefore been an increasing interest in determining dissipativity properties from (measured) trajectories directly, while an explicit model of the system remains undisclosed. Most existing approaches for data-driven dissipativity, however, guarantee the dissipativity condition only over a finite time horizon and provide weak or no guarantees on robustness in the presence of noise. In this paper, we present a framework for verifying dissipativity properties from measured data with desirable guarantees. We first consider the case of input-state measurements, where we provide non-conservative and computationally attractive conditions in the presence of noise. We extend this approach to input-output data, where similar results hold in the noise-free case.
We then provide results for the noisy input-output data case, which is particularly challenging. Finally, we apply the proposed approach in a real-world experiment and illustrate its applicability and advantages compared to established methods based on system identification.
\end{abstract}

\begin{IEEEkeywords}
Data-based systems analysis, Identification for Control, Uncertain systems, Machine learning, Linear Systems
\end{IEEEkeywords}

\IEEEpeerreviewmaketitle

\section{Introduction}
\IEEEPARstart{W}{ith} the rising complexity of systems, obtaining a suitable mathematical model for a yet unknown systems becomes more and more cumbersome. At the same time, data is becoming ubiquitous and cheap. Therefore, there has been a rising interest in establishing a data-driven framework that allows for systems analysis and control from data with the same guarantees as obtained through the well-known and established model-based approaches. Especially for linear time-invariant (LTI) systems, there has recently been considerable progress in setting up such a data-driven framework. The basis for this line of work can be attributed to the seminal work in \cite{Willems05}, in which the authors prove in the behavioral framework that the behavior of an LTI system can be described by suitable data-dependent matrices under the condition that the input is persistently exciting. This representation in state-space as stated and discussed in \cite{Berberich2019a} and proven 
in \cite{Waarde2020}, provides a basis that allows for systems analysis and controller design with rigorous guarantees from (measured) trajectories. Recent developments in this direction include state-feedback design from input-state trajectories \cite{Persis2019}, robust controller synthesis from noisy input-state trajectories \cite{Berberich2019c}, data-driven model predictive control \cite{Coulson2019,Berberich2019b}, data informativity~\cite{Waarde2020a}, dissipativity properties from input-output trajectories \cite{Maupong2017,Romer2019a,Koch2020} and from input-state trajectories \cite{Koch2020a}.

As in \cite{Maupong2017,Romer2019a,Koch2020,Koch2020a}, we are interested in dissipativity properties from data. Dissipativity properties cannot only be used for systems analysis giving insights into an unknown system, but knowledge of dissipativity properties allows for direct application of well-known feedback theorems with guaranteed stability of the closed loop. For examples of such feedback theorems and stabilizing, robust or distributed controller design on the basis of dissipativity properties, the reader is referred to the standard literature with respect to dissipativity properties, which includes \cite{Zames1966, Desoer1975, Schaft2000}. Due to the well-established literature on dissipativity-based controller design, there has been a considerable number of approaches to determine such dissipativity properties from data.

Very generally, the literature on data-driven dissipativity can be roughly categorized into three inherently different setups. Firstly, a large number of approaches consider online sampling schemes for LTI systems, where it is assumed that it is possible to choose the input and measure the output in an iterative fashion. This line of works includes \cite{Wahlberg2010,Rojas2012,Tanemura2019,Romer2019c,Mueller2017}. While there are some distinct advantages and disadvantages to each of these methods, the joint limitation is that iterative experiments are needed, which requires access to the plant and is potentially a more time-consuming task than purely computational and offline approaches. 
We define the second category as approaches that apply for rather general classes of nonlinear systems, but require large or even huge amounts of input-output trajectories (e.g.~\cite{Montenbruck2016a,Romer2017a,Sharf2020,Romer2019b,Martin2020}). While these works consider more general nonlinear systems, the sheer amount of required input-output trajectories hampers their application. 

Finally, the third category includes all offline computational approaches from one input-state or input-output trajectory for LTI systems, which includes \cite{Maupong2017,Romer2019a,Koch2020,Koch2020a,Saeki2020}. The approaches in \cite{Maupong2017,Romer2019a,Koch2020,Saeki2020} do not provide guarantees from noisy trajectories. 
Therefore, we extend in this work the idea presented in \cite{Koch2020a}, where rigorous and quantitative guarantees from noise-corrupted input-state trajectories can be given.
However, the therein presented result are generally not tight and also the computational complexity grows with increasing amounts of data. 
In this paper, we employ ideas similar to recent results on data-driven controller design in~\cite{Waarde2020b,Berberich2020} in order to derive both non-conservative and computationally attractive conditions for data-driven dissipativity.

The remainder of the paper is structured as follows. In Sec.~\ref{sec:setup}, we introduce the problem formulation and present some related results that will be used throughout the paper. We then introduce an equivalent dissipativity characterization purely on the basis of input-state data in Sec.~\ref{sec:diss} followed by a noisy consideration thereof in Sec.~\ref{sec:noise}, where we provide a tight robust verification framework for dissipativity properties. Next, we extend the results to input-output trajectories first in the noise-free case in Sec.~\ref{sec:output} followed by a consideration of noise-corrupted trajectories in Sec.~\ref{sec:output_noise}. Finally, we apply the introduced approaches to real-world data of a two-tank water system.

\section{Problem setup}
\label{sec:setup}
We consider multiple-input multiple-output 
discrete-time LTI systems for which there exists a (controllable) minimal realization of the form
\begin{align}
\begin{split}
x_{k+1}&=Ax_k+Bu_k, \>\>x_0=\bar{x},\\
y_k&=Cx_k+Du_k,
\label{eq:sys}
\end{split}
\end{align}
with $x_k \in \mathbb{R}^{n}$, $u_k \in \mathbb{R}^m$ and $y_k \in \mathbb{R}^p$.  

In this paper, we develop a framework for verifying dissipativity properties of~\eqref{eq:sys} directly from measured data, without identifying a model of the system.
We thereby consider two cases:
\begin{itemize}
\item \textbf{Input-state data} (Sec.~\ref{sec:diss} \& \ref{sec:noise})\textbf{:}
We assume that $A$ and $B$ are unknown, but one input-state trajectory $\{x_k\}_{k=0}^N$, $\{u_k\}_{k=0}^{N-1}$ is available. 
Further, we assume that\footnote{It is straightforward to extend the presented results to the case that $C$ and $D$ are unknown but measurements of $\{y_k\}_{k=0}^{N-1}$ are available.} $C$, $D$ are known.
\item \textbf{Input-output data} (Sec.~\ref{sec:output} \& \ref{sec:output_noise})\textbf{:} We assume that $A,B,C$ and $D$ are unknown, but one input-output trajectory $\{u_k\}_{k=0}^{N-1}$, $\{y_k\}_{k=0}^{N-1}$ is available as well as an upper bound on the lag of the system $l \geq \underline{l}$ (cf.\ Def.~\ref{def:lag}).
\end{itemize}
For each case, we in turn distinguish between noise-free measurements of~\eqref{eq:sys} (Sec.~\ref{sec:diss} \& \ref{sec:output}) and noisy data (Sec.~\ref{sec:noise} \& \ref{sec:output_noise}).
We collect the respective data sequences
$\{u_k\}_{k=0}^{N-1}$, $\{x_k\}_{k=0}^{N}$ or $\{y_k\}_{k=0}^{N-1}$ in the following matrices 
\begin{align*}
X &\coloneqq \begin{pmatrix} x_0 & x_1 & \cdots & x_{N-1} \end{pmatrix}, \\
X_+ &\coloneqq \begin{pmatrix} x_1 & x_2 & \cdots & x_N \end{pmatrix}, \\
U &\coloneqq \begin{pmatrix} u_0 & u_1 & \cdots & u_{N-1} \end{pmatrix}, \\
Y &\coloneqq \begin{pmatrix} y_0 & y_1 & \cdots & y_{N-1} \end{pmatrix}.
\end{align*}

Our approach is based on data, i.e., on measured trajectories of the system~\eqref{eq:sys}, with the only assumption that this measured trajectory is informative enough. 
One condition in this respect, which will play an important role in the following sections, is the rank condition
\begin{align}
\mathrm{rank} \begin{pmatrix} X \\ U \end{pmatrix} = n+m.
\label{eq:rank}
\end{align}
Generally speaking, this condition can be ensured by requiring that the input of the measured trajectory is sufficiently persistently exciting \cite{Willems05}. Given a finite sequence $\left\{u_k\right\}_{k=0}^{N-1}$, we first define the corresponding Hankel matrix
\begin{align*}
H_L&(u)\coloneqq\begin{pmatrix}u_0 & u_1 & \dots & u_{N-L}\\
u_1 & u_2 & \dots & u_{N-L+1}\\
\vdots & \vdots & \ddots & \vdots\\
u_{L-1} & u_L & \dots & u_{N-1}
\end{pmatrix}.
\end{align*}
We can now recall the notion of persistency of excitation.
\begin{definition}\label{def:pe}
We say that a sequence $\left\{u_k\right\}_{k=0}^{N-1}$ with $u_k\in\mathbb{R}^m$ is persistently exciting of order $L$, if $\text{rank}\left(H_L(u)\right)=mL$.
\end{definition}
With this definition of persistency of excitation, we can find a sufficient condition to ensure \eqref{eq:rank}.
\begin{lemma}[\cite{Willems05},~Corollary~2]
\label{lem:pe}
If the sequence $\left\{u_k\right\}_{k=0}^{N-1}$ with $u_k\in\mathbb{R}^m$ is persistently exciting of order $n+1$, 
then condition \eqref{eq:rank} holds.
\end{lemma}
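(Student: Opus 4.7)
The plan is to prove the contrapositive: if $\mathrm{rank}\begin{pmatrix} X \\ U \end{pmatrix} < n+m$, then $\{u_k\}_{k=0}^{N-1}$ is not persistently exciting of order $n+1$. I would start by picking a nonzero row vector $(\xi^\top,\eta^\top)$ in the left kernel of $\begin{pmatrix} X \\ U \end{pmatrix}$, so that $\xi^\top x_k + \eta^\top u_k = 0$ for every $k\in\{0,\ldots,N-1\}$. The goal is to convert this single identity, applied at $n+1$ consecutive time indices, into a nonzero left-null vector of $H_{n+1}(u)$, contradicting the persistency-of-excitation hypothesis.

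For the main computation, evaluating the identity at time $k+j$ and substituting $x_{k+j} = A^j x_k + \sum_{i=0}^{j-1} A^{j-1-i} B\,u_{k+i}$ yields
\begin{align*}
\xi^\top A^j x_k = -\eta^\top u_{k+j} - \sum_{i=0}^{j-1} \xi^\top A^{j-1-i} B\,u_{k+i}
\end{align*}
for $j = 0, 1, \ldots, n$ and every $k \leq N-1-n$. I would then take the linear combination of these $n+1$ identities prescribed by the Cayley--Hamilton relation $p(A)=0$ for the characteristic polynomial $p$ of $A$; this eliminates every state term $\xi^\top A^j x_k$ simultaneously and leaves a relation of the form $\sum_{j=0}^{n} \alpha_j^\top u_{k+j} = 0$, valid for every admissible $k$, which reads $\alpha^\top H_{n+1}(u)=0$. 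Persistency of excitation of order $n+1$ then forces all $\alpha_j$ to vanish.

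Unwinding these zero-coefficient conditions closes the argument: the coefficient of $u_{k+n}$ is exactly $-\eta^\top$, giving $\eta=0$; a backward induction on the remaining coefficients then yields the Markov-parameter identities $\xi^\top A^i B = 0$ for $i=0,\ldots,n-1$. Controllability of the minimal realization~\eqref{eq:sys} makes the controllability matrix $\begin{pmatrix} B & AB & \cdots & A^{n-1}B \end{pmatrix}$ have full row rank, so $\xi=0$, and hence $(\xi,\eta)=0$, contradicting the starting assumption.

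The main obstacle I anticipate is the bookkeeping in the Cayley--Hamilton step: the coefficients $\alpha_j$ depend on $\eta$ and the Markov parameters $\xi^\top A^i B$ in an intertwined fashion, and the argument only closes because this dependence has a triangular structure that permits extracting $\eta=0$ first and the Markov parameters afterwards in order. A less computational alternative is to invoke the general fundamental lemma of~\cite{Willems05} in its multi-step form and specialize to trajectory length $L=1$, from which the stated rank condition follows immediately.
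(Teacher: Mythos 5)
Your proof is correct, and it is worth noting that the paper itself offers no proof of this statement: Lemma~\ref{lem:pe} is imported verbatim as Corollary~2 of \cite{Willems05} (the $L=1$ specialization of the fundamental lemma), which is exactly the ``less computational alternative'' you mention at the end. Your self-contained contrapositive argument is the standard state-space proof of that corollary, and the bookkeeping you worry about does close: with $p(\lambda)=\lambda^n+c_{n-1}\lambda^{n-1}+\dots+c_0$ monic, the coefficient of $u_{k+t}$ in the Cayley--Hamilton combination is $\alpha_t^\top=-c_t\eta^\top-\sum_{j=t+1}^{n}c_j\,\xi^\top A^{j-1-t}B$, so $\alpha_n=-\eta$ gives $\eta=0$ first, and then for $t=n-1,n-2,\dots,0$ the only term not already killed by the inductive hypothesis is $c_n\,\xi^\top A^{n-1-t}B=\xi^\top A^{n-1-t}B$, yielding the Markov-parameter identities in the right order; the index range $k=0,\dots,N-1-n$ matches exactly the columns of $H_{n+1}(u)$, so the relation does read $\alpha^\top H_{n+1}(u)=0$. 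The one hypothesis you use that is not explicit in the lemma's wording but is part of the paper's standing setup (and of \cite{Willems05}) is controllability of the realization~\eqref{eq:sys}; it is genuinely needed in the last step to conclude $\xi=0$ from $\xi^\top A^iB=0$, $i=0,\dots,n-1$, and the lemma is false without it, so you were right to invoke it. In short: the paper buys the result by citation; your argument proves it from first principles with no gap.
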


Since their introduction in \cite{Willems1972}, dissipativity properties have become increasingly relevant in systems analysis and control. Usually, these properties can be verified using a full mathematical model of the system.
In this paper, we are interested in determining dissipativity properties directly from (noisy) data with guarantees. While the notion of dissipativity was introduced in \cite{Willems1972} for general (nonlinear) systems, we make use of equivalent formulations for LTI systems with quadratic supply rates as, e.g., presented in \cite{Scherer2000}. Quadratic supply rates are functions $s: \mathbb{R}^m \times \mathbb{R}^p \rightarrow \mathbb{R}$ defined by 
\begin{align}
s(u,y) = \begin{pmatrix} u \\ y \end{pmatrix}^\top \Pi
\begin{pmatrix} u \\ y \end{pmatrix}.
\label{eq:supply} 
\end{align}
The matrix $\Pi \in \mathbb{R}^{(m+p) \times (m+p)}$ will be partitioned throughout this paper as
\begin{align*}
\Pi=\begin{pmatrix}R &S^\top\\S&Q\end{pmatrix}
\end{align*}
 with $Q=Q^{\top} \in \mathbb{R}^{p\times p}$, $S \in \mathbb{R}^{p\times m}$ and $R=R^\top \in \mathbb{R}^{m\times m}$. 
\begin{definition}
\label{def:1}
A system \eqref{eq:sys} is said to be dissipative w.r.t. the supply rate $s$
if there exists a function $V: \mathbb{R}^n \rightarrow \mathbb{R}$ which is bounded from below such that
\begin{align}
V(x_{k^{\prime \prime}}) - V(x_{k^\prime}) \leq \sum_{k=k^\prime}^{k^{\prime \prime} -1} s(u_k,y_k)
\label{eq:diss-nonstrict}
\end{align}
for all $0 \leq k^\prime < k^{\prime \prime}$ and all signals $(u,x,y)$ which satisfy \eqref{eq:sys}.
It is said to be \textit{strictly} dissipative if instead of \eqref{eq:diss-nonstrict}
\begin{align*}
V(x_{k^{\prime \prime}}) - V(x_{k^\prime}) \leq \sum_{k=k^\prime}^{k^{\prime \prime} -1} s(u_k,y_k) - \epsilon \sum_{k=k^\prime}^{k^{\prime \prime} -1} \|u_k\|_2^2
\end{align*}
holds for all $0 \leq k^\prime < k^{\prime \prime}$, all signals $(u,x,y)$ which satisfy \eqref{eq:sys} and some $\epsilon > 0$.
\end{definition}

Hereby, the matrices $(Q,S,R)$ in the supply rate define the system property at hand. With the supply rates defined by
\begin{align}
\Pi_{\gamma} = \begin{pmatrix} \gamma^2 I & 0 \\ 0 & -I \end{pmatrix}, \quad \Pi_{\text{P}} = \begin{pmatrix} -\rho I & 0.5 I \\ 0.5 I & 0 \end{pmatrix},
\label{eq:pi}
\end{align}
to name two well-known examples, we retrieve the operator gain $\gamma$ and the input-feedforward passivity parameter $\rho$, respectively. The general dissipativity property specified by $(Q,S,R)$ will in the following also be referred to as $(Q,S,R)$-dissipativity. 

In the remainder of the paper, we make use of different equivalent conditions on dissipativity of an LTI system. The following standard result together with explanations and the proofs can be found, e.g., in \cite{Scherer2000,Kottenstette2014} and references therein. 
\begin{theorem}
Suppose that the system~\eqref{eq:sys} is controllable and let $s$ be a quadratic supply rate of the form~\eqref{eq:supply}. Then the following statements are equivalent.
\begin{itemize}
\item[a)] The system is $(Q,S,R)$-dissipative.
\item[b)] There exists a quadratic storage function $V(x) \coloneqq x^\top P x$ with $P = P^\top \succeq 0$ 
such that 
\begin{align*}
V(x_{k+1}) - V(x_k) \leq s(u_k,y_k)
\end{align*}
for all $k$ and all $(u,x,y)$ satisfying \eqref{eq:sys}.
\item[c)] There exists a matrix $P = P^\top \succeq 0$ such that 
\begin{align}
\label{eq:diss_lmi}
\begin{pmatrix} A^\top PA - P - \hat{Q} & A^\top PB - \hat{S} \\
(A^\top PB - \hat{S} )^\top & -\hat{R} + B^\top P B \end{pmatrix} \preceq 0
\end{align}
with  $\hat{Q} = C^\top QC$, $\hat{S} = C^\top S + C^\top QD$ and $\hat{R} = D^\top QD + (D^\top S + S^\top D) + R$.
\end{itemize}
\label{thm:diss_lmi}
\end{theorem}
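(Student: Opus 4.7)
The plan is to establish the cycle (c) $\Rightarrow$ (b) $\Rightarrow$ (a) $\Rightarrow$ (c). The first two implications reduce to direct algebra, while the last relies on the classical available-storage construction, which is the main obstacle.

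For (c) $\Rightarrow$ (b) I would take $V(x) = x^\top P x$ and substitute $x_{k+1} = Ax_k + Bu_k$ and $y_k = Cx_k + Du_k$ into $V(x_{k+1}) - V(x_k) - s(u_k,y_k)$. Collecting the quadratic, bilinear and input-quadratic terms and using the definitions of $\hat{Q}, \hat{S}, \hat{R}$ yields
\[
V(x_{k+1}) - V(x_k) - s(u_k,y_k) = \begin{pmatrix} x_k \\ u_k \end{pmatrix}^{\!\top} M \begin{pmatrix} x_k \\ u_k \end{pmatrix},
\]
where $M$ is precisely the block matrix in~\eqref{eq:diss_lmi}. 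The assumption $M \preceq 0$ then gives the one-step inequality for every $(x_k,u_k)$, and $P \succeq 0$ guarantees the required lower bound on $V$.

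For (b) $\Rightarrow$ (a) I would telescope the one-step inequality over $k = k',\ldots,k''-1$; the intermediate $V(x_k)$ terms cancel pairwise and $V$ is bounded below by assumption.

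The hard direction is (a) $\Rightarrow$ (c), and here I would follow the standard route via the available storage
\[
V_a(x_0) \coloneqq \sup_{N \geq 0,\; u_0,\ldots,u_{N-1}} \Bigl( -\sum_{k=0}^{N-1} s(u_k,y_k) \Bigr).
\]
Given any storage function $V$ from~(a), the dissipation inequality bounds $V_a(x_0) \leq V(x_0) - \inf V < \infty$, so $V_a$ is finite everywhere. The crucial step is then to show that, for an LTI plant with a quadratic supply rate and a controllable pair $(A,B)$, $V_a$ is itself a quadratic form $V_a(x) = x^\top P_a x$ with $P_a = P_a^\top \succeq 0$; this is the classical argument in~\cite{Scherer2000} and typically proceeds through the associated algebraic Riccati inequality and a KYP-type analysis, where controllability is indispensable to rule out degeneracy on uncontrollable subspaces. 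Once $V_a$ is known to be quadratic, it satisfies the one-step inequality by construction, and reversing the algebraic manipulation from the first implication produces exactly~\eqref{eq:diss_lmi} with $P = P_a$.
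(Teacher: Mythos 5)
Your proposal is correct in outline and takes exactly the route the paper relies on: the paper does not prove Thm.~\ref{thm:diss_lmi} itself but defers to \cite{Scherer2000,Kottenstette2014}, and your cycle (c)$\Rightarrow$(b) by expanding $V(x_{k+1})-V(x_k)-s(u_k,y_k)$ into the quadratic form with matrix \eqref{eq:diss_lmi}, (b)$\Rightarrow$(a) by telescoping, and (a)$\Rightarrow$(c) via finiteness and quadraticity of the available storage is precisely the classical argument found in those references. No gaps worth flagging beyond the quadraticity of $V_a$, which you correctly identify as the step to be imported from the cited literature.
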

\begin{remark}
The attentive reader might have noticed that, unlike in \cite{Scherer2000}, we require the storage function $V$ to be lower bounded, which yields the condition $P=P^\top \succeq 0$. 
Generally speaking, dissipativity as in \cite{Scherer2000} can be defined without requiring a lower bounded storage function.
However, a key motivation of inferring dissipativity properties from data (and hence a key motivation of the present paper) is to use such dissipativity properties in order to design controllers, e.g., for closed-loop stability.
In this case, it is meaningful to only consider lower bounded storage functions, similar to much of the related literature (cf.\ e.g.\ \cite{Kottenstette2014}).
The results in this paper can be directly extended to using data to verify ''cyclo-dissipativity'', compare~\cite{Willems2007}, in which case the storage function does not need to be bounded from below (i.e., $P\nsucceq0$ in Thm.~\ref{thm:diss_lmi}). Similarly, if a positive definite storage function is desired, one can simply substitute $P \succeq 0$ by $P \succ 0$ in Thm.~\ref{thm:diss_lmi}.
\end{remark}

Other approaches to determine dissipativity from data rely on an input-output formulation of dissipativity (e.g.~\cite{Maupong2017,Koch2020,Koch2020a,Wahlberg2010}). To put this into perspective, the following result shows that this input-output definition is equivalent to Def.~\ref{def:1}.
\begin{theorem}[\cite{Hill1980}]
\label{thm:Hill}
A system~\eqref{eq:sys} is dissipative w.r.t. the supply rate $s$ in \eqref{eq:supply} according to Def.~\ref{def:1} if and only if
\begin{align}
\sum_{k=0}^r s(u_k,y_k)
\geq0, \quad \forall r \geq 0,
\label{eq:diss}
\end{align}
for all trajectories $\{u_k,y_k\}_{k=0}^\infty$ of \eqref{eq:sys} with initial condition $x_0=0$. 
\end{theorem}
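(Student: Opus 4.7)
We prove the two directions of the equivalence separately. The forward direction ``Def.~\ref{def:1} $\Rightarrow$ \eqref{eq:diss}'' is a short consequence of Theorem~\ref{thm:diss_lmi}: invoking the equivalence (a)$\Leftrightarrow$(b) there, we may replace the general bounded-below storage function $V$ of Def.~\ref{def:1} by a quadratic one $V(x) = x^\top P x$ with $P \succeq 0$. Summing the one-step dissipation inequality $V(x_{k+1}) - V(x_k) \leq s(u_k,y_k)$ from $k = 0$ to $r$ telescopes to $\sum_{k=0}^{r} s(u_k,y_k) \geq V(x_{r+1}) - V(0)$, and $x_0 = 0$ together with $P \succeq 0$ makes both $V(0) = 0$ and $V(x_{r+1}) \geq 0$, yielding~\eqref{eq:diss}.

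For the converse ``\eqref{eq:diss} $\Rightarrow$ Def.~\ref{def:1}'', the plan is to construct the \emph{available storage}
\begin{align*}
V_a(\bar{x}) \coloneqq \sup \Big\{ -\!\sum_{k=0}^{r-1}\! s(u_k,y_k) : r \geq 0,\; (u,x,y) \text{ satisfies } \eqref{eq:sys} \text{ with } x_0 = \bar{x} \Big\}
\end{align*}
and verify that $V_a$ is a storage function in the sense of Def.~\ref{def:1}. Taking $r = 0$ in the supremum immediately gives $V_a \geq 0$, so $V_a$ is automatically bounded from below. The dissipation inequality $V_a(x_{k''}) - V_a(x_{k'}) \leq \sum_{k=k'}^{k''-1} s(u_k,y_k)$ follows from a standard dynamic-programming argument, splitting the supremum at an intermediate time step.

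The main obstacle is to show $V_a(\bar{x}) < \infty$ for every $\bar{x}$, and this is where controllability of~\eqref{eq:sys} is essential. For each $\bar{x}$ we fix, by controllability, a driving input $(\tilde{u}_0,\ldots,\tilde{u}_{n-1})$ steering the state from $0$ to $\bar{x}$ in $n$ steps, producing a constant supply cost $C(\bar{x}) \coloneqq \sum_{k=0}^{n-1} s(\tilde{u}_k,\tilde{y}_k)$. For any trajectory $(u,x,y)$ of~\eqref{eq:sys} emanating from $\bar{x}$, prepending this driving prefix yields a valid trajectory of~\eqref{eq:sys} starting at $x_0 = 0$; applying the hypothesis~\eqref{eq:diss} to the extended trajectory at the shifted time index gives $C(\bar{x}) + \sum_{k=0}^{r-1} s(u_k,y_k) \geq 0$, i.e., $-\sum_{k=0}^{r-1} s(u_k,y_k) \leq C(\bar{x})$. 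Taking the supremum over $r$ and over the trajectory then bounds $V_a(\bar{x}) \leq C(\bar{x}) < \infty$, identifying $V_a$ as the desired storage function. The only technical point to handle with care is the time-shifted indexing in the concatenation argument.
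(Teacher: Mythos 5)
The paper does not actually prove this theorem: it is imported verbatim from the cited reference [Hill1980], so there is no in-paper argument to compare against. Your proposal is correct and is essentially the classical available-storage proof from that literature. Two points are worth highlighting. First, in the forward direction your detour through Theorem~\ref{thm:diss_lmi} is not cosmetic but necessary: Definition~\ref{def:1} only requires $V$ to be bounded from below, and telescoping with a generic such $V$ would yield $\sum_{k=0}^{r} s(u_k,y_k) \geq V(x_{r+1}) - V(0) \geq \inf V - V(0)$, which is not obviously nonnegative; replacing $V$ by a quadratic $x^\top P x$ with $P \succeq 0$ (valid under the paper's standing controllability assumption) gives $V(0)=0=\min V$ and closes that gap. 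Second, in the converse your finiteness bound $V_a(\bar{x}) \leq C(\bar{x})$ implicitly uses $C(\bar{x}) \geq 0$ to dominate the $r=0$ term of the supremum; this follows from applying~\eqref{eq:diss} to the driving prefix alone, but stating $V_a(\bar{x}) \leq \max\{0, C(\bar{x})\}$ would avoid the issue entirely. With those details made explicit, the dynamic-programming verification of the dissipation inequality and the identification of $V_a$ as a bounded-below storage function are standard and sound.
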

While this result shows the equivalence of the state-space definition of dissipativity and the input-output definition, in many other works where dissipativity is determined from data, dissipativity is only considered over a finite horizon. In these works (e.g.~\cite{Maupong2017,Koch2020,Wahlberg2010,Rojas2012,Romer2019c}), the condition~\eqref{eq:diss} is only verified over the horizon $r$ for $r\leq L$, which is also called $L$-dissipativity. 
Throughout this paper, we consider the classical definition of dissipativity as provided in Def.~\ref{def:1}.

Furthermore, Thm.~\ref{thm:Hill} also allows to infer dissipativity of systems which are not given in a minimal realizations by investigating dissipativity of a minimal realization with the same input-output behavior.
Whenever two systems have the same input-output behavior (i.e.~same span of input-output trajectories with zero initial condition), they satisfy the same condition~\eqref{eq:diss}. This insight will be especially important in Sec.~\ref{sec:output} and~\ref{sec:output_noise} when considering data-driven dissipativity from input-output data.

In the remainder of this paper, we use the equivalences stated in Thm.~\ref{thm:diss_lmi} and Thm.~\ref{thm:Hill} to verify or find dissipativity properties from data. 
We start in the following section by considering noise-free input and state trajectories.

\section{Data-driven dissipativity from input-state trajectories}
\label{sec:diss}
With the definitions and analysis in the previous section, we can directly state an equivalent formulation for dissipativity from noise-free input and state trajectories. The necessary and sufficient condition is a simple LMI that can be solved using standard solvers. 
\begin{theorem}[\cite{Koch2020a}]
\label{thm:1}
Given input and state trajectories $\{u_k\}_{k=0}^{N-1}$, $\{x_k\}_{k=0}^{N}$ of a controllable LTI system $G$ and the feasibility problem to find $P=P^\top\succeq0$ such that
\begin{align}
\begin{split}
&X_+^\top P X_+ - X^\top P X \\ 
&- \begin{pmatrix} U \\ CX+DU \end{pmatrix}^\top \begin{pmatrix} R & S^\top \\ S & Q \end{pmatrix} \begin{pmatrix} U \\ CX+DU \end{pmatrix}\preceq 0.
\end{split}
\label{eq:opt_allg}
\end{align}
\begin{enumerate}
\item If there exists a $P=P^\top\succeq0$ such that \eqref{eq:opt_allg} holds and, additionally, the rank condition \eqref{eq:rank} is satisfied, 
then $G$ is $(Q,S,R)$-dissipative.
\item If there exists no $P=P^\top\succeq0$ such that \eqref{eq:opt_allg} holds, then $G$ is not $(Q,S,R)$-dissipative.
\end{enumerate} 
\end{theorem}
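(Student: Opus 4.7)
The plan is to show that, under the dynamics of~\eqref{eq:sys} and the rank condition~\eqref{eq:rank}, the data-dependent inequality~\eqref{eq:opt_allg} collapses to the model-based LMI~\eqref{eq:diss_lmi} of Thm.~\ref{thm:diss_lmi}(c); the two claims then follow by invoking Thm.~\ref{thm:diss_lmi} in opposite directions.

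First I would substitute the dynamics $X_+ = AX + BU$ into the $P$-dependent part and use
$$\begin{pmatrix} U \\ CX+DU \end{pmatrix} = \begin{pmatrix} 0 & I \\ C & D \end{pmatrix}\begin{pmatrix} X \\ U \end{pmatrix}$$
to collect the supply-rate part. The entire left-hand side of~\eqref{eq:opt_allg} then rewrites as a single quadratic form
$$W(P) \coloneqq \begin{pmatrix} X \\ U \end{pmatrix}^\top M(P) \begin{pmatrix} X \\ U \end{pmatrix},$$
with a symmetric middle matrix $M(P)$ depending only on $A,B,C,D,P$ and $\Pi$. Carrying out the block multiplication, the supply-rate sandwich becomes $\bigl(\begin{smallmatrix}\hat{Q}&\hat{S}\\\hat{S}^\top&\hat{R}\end{smallmatrix}\bigr)$ with $\hat{Q},\hat{S},\hat{R}$ exactly as defined below~\eqref{eq:diss_lmi}, while $X_+^\top P X_+ - X^\top P X$ contributes the block matrix $\bigl(\begin{smallmatrix}A^\top PA - P & A^\top PB\\B^\top PA & B^\top PB\end{smallmatrix}\bigr)$. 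Thus $M(P)$ coincides with the LMI matrix in~\eqref{eq:diss_lmi}.

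For part~1), the rank condition~\eqref{eq:rank} guarantees that $\bigl(\begin{smallmatrix}X\\U\end{smallmatrix}\bigr)$ has full row rank $n+m$, so its columns span $\mathbb{R}^{n+m}$ and $W(P)\preceq 0$ is equivalent to $M(P)\preceq 0$, i.e.\ to~\eqref{eq:diss_lmi}. Together with $P=P^\top\succeq 0$ and the assumed controllability, Thm.~\ref{thm:diss_lmi} yields $(Q,S,R)$-dissipativity of $G$. For part~2), no rank assumption is needed since the implication ``$M(P)\preceq 0 \Rightarrow W(P)\preceq 0$'' holds for any data matrix: if $G$ were dissipative, Thm.~\ref{thm:diss_lmi}(c) would deliver some $P\succeq 0$ with $M(P)\preceq 0$, which would a fortiori satisfy~\eqref{eq:opt_allg}, contradicting the hypothesis.

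The main obstacle is purely the bookkeeping of the second step: verifying that the block expansion of the supply-rate sandwich produces exactly the matrices $\hat{Q}=C^\top QC$, $\hat{S}=C^\top S + C^\top QD$ and $\hat{R}=R + D^\top S + S^\top D + D^\top QD$ with the correct symmetry; once $M(P)$ is identified with the matrix in~\eqref{eq:diss_lmi}, the remaining argument reduces to a short projection/rank observation combined with Thm.~\ref{thm:diss_lmi}.
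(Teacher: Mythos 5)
Your proposal is correct and follows essentially the same route as the paper's proof: substitute $X_+=AX+BU$, rewrite~\eqref{eq:opt_allg} as the quadratic form $\bigl(\begin{smallmatrix}X\\U\end{smallmatrix}\bigr)^\top M(P)\bigl(\begin{smallmatrix}X\\U\end{smallmatrix}\bigr)$ with $M(P)$ equal to the matrix in~\eqref{eq:diss_lmi}, use the full row rank from~\eqref{eq:rank} for part~1), and argue part~2) by noting that $M(P)\preceq0$ always implies feasibility of~\eqref{eq:opt_allg} regardless of rank. The block computation you flag as the main bookkeeping step does indeed produce exactly $\hat{Q}$, $\hat{S}$, $\hat{R}$ as defined below~\eqref{eq:diss_lmi}, so there is no gap.
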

\begin{proof} Substituting $X_+ = A X + B U$, the semidefinitness condition in \eqref{eq:opt_allg} can be equivalently written as 
\begin{align}
\begin{pmatrix} X \\ U \end{pmatrix}^\top 
\begin{pmatrix} A^\top PA - P - \hat{Q} & A^\top PB - \hat{S} \\
(A^\top PB - \hat{S} )^\top & -\hat{R} + B^\top P B \end{pmatrix} 
 \begin{pmatrix} X \\ U \end{pmatrix}  
\label{eq:pf_lmi}
\end{align}
with  $\hat{Q} = C^\top QC$, $\hat{S} = C^\top S + C^\top QD$ and $\hat{R} = D^\top QD + (D^\top S + S^\top D) + R$.
\begin{enumerate}
\item With \eqref{eq:rank}, the semidefiniteness condition \eqref{eq:pf_lmi} in turn implies that \eqref{eq:diss_lmi} holds, which implies dissipativity by Thm.~\ref{thm:diss_lmi}.
\item If problem~\eqref{eq:opt_allg} is infeasible, this directly implies that \eqref{eq:diss_lmi} is not negative semidefinite for any $P$, i.e. $G$ is not dissipative by Thm.~\ref{thm:diss_lmi}.
\end{enumerate}
\end{proof}

\begin{remark}
The condition \eqref{eq:rank} can easily be checked for the available data. With Lem.~\ref{lem:pe}, this rank condition can also be enforced by requiring or choosing the input $\{u_k\}_{k=0}^{N-1}$ to be persistently exciting of order $n+1$. Note that the latter condition requires a minimum length of the input trajectory, namely $N \geq (m + 1)n + m$. 
\end{remark}

The result stated in Thm.~\ref{thm:1} is conceptually similar to the approach in \cite{Waarde2020a} where methods for data-based system analysis (e.g., controllability, stability) are provided by verifying such properties for all systems which are consistent with the data. 
The data-based formulation of dissipativity given by Thm.~\ref{thm:1} is particularly simple and only requires solving a single semidefinite program. 
The proof relies on the fact that, if the matrix $\begin{pmatrix} X \\ U \end{pmatrix}$ has full row rank, then it spans all possible system trajectories. 
Multiplying~\eqref{eq:diss_lmi} from both sides by this matrix and exploiting the system dynamics $X_+=AX+BU$, we obtain the stated result.

In contrast to other input-output approaches (e.g.~\cite{Maupong2017,Koch2020,Wahlberg2010}), we exploited here the state-space definition of dissipativity which can be verified by looking at a difference viewpoint, i.e. looking at the difference at two time points (cf.\ Def.~\ref{def:1}). This yields the advantages, compared to many other data-driven dissipativity approaches, that rigorous guarantees on the infinite horizon as well as in the noisy case can be obtained, as will be discussed in the next section.  

\section{Dissipativity properties from noisy input-state trajectories}
\label{sec:noise}
While Sec.~\ref{sec:diss} provides a simple, computationally attractive condition to verify dissipativity properties of unknown systems, it assumes that exact measurements of input and state variables are available.
This assumption does rarely hold in practice.
Therefore, in this section, we extend the results to the case that the measured data are affected by noise.
More precisely, we consider in this section a variation of \eqref{eq:sys} that is disturbed by process noise of the form
\begin{align}
\begin{split}
x_{k+1} &= A x_k + B u_k + B_w w_k, \\
y_k &= C x_k + D u_k, 
\end{split}
\label{eq:sys_noise}
\end{align}
where $w_k \in \mathbb{R}^{m_w}$ denotes the noise and $B_w \in \mathbb{R}^{n \times m_w}$ is some known matrix describing the influence of the noise on the system dynamics. 

\begin{remark}
\label{rem:bw}
Note that including a known matrix $B_w$ into the analysis offers the possibility to include additional knowledge on the influence of the process noise on the system into the optimization problem. If no additional information on the effect of the noise on the different states is available, one can simply choose the identity matrix $B_w = I$.
\end{remark}

We denote the actual noise sequence which yields the available input-state trajectory $\{u_k\}_{k=0}^{N-1}$, $\{x_k\}_{k=0}^{N}$ by $\{\hat{w}_k\}_{k=0}^{N-1}$. While this noise sequence $\{\hat{w}_k\}_{k=0}^{N-1}$ is unknown, we assume that some information on the noise is available in form of a bound on the stacked matrix 
\begin{align*}
\hat{W} = \begin{pmatrix} \hat{w}_0 & \hat{w}_1 & \cdots & \hat{w}_{N-1} \end{pmatrix}
\end{align*}
as specified in the following assumption.
\begin{assumption} 
The matrix $\hat{W}$ denoting the stacked process noise $\{\hat{w}_k\}_{k=0}^{N-1}$ is an element of the set
\begin{align}
\mathcal{W} = \{ W \in \mathbb{R}^{m_w \times N} | \begin{pmatrix} W^\top \\ I \end{pmatrix}^\top \begin{pmatrix} Q_w & S_w \\ S_w^\top & R_w \end{pmatrix} \begin{pmatrix} W^\top \\ I \end{pmatrix} \succeq 0 \}
\label{eq:W}
\end{align}
with $Q_w \in \mathbb{R}^{N \times N}$, $S_w \in \mathbb{R}^{N \times m_w}$ and $R_w \in \mathbb{R}^{m_w \times m_w}$ with $Q_w \prec 0$. 
\end{assumption}
This quadratic bound on the noise matrix $\hat{W}$ is a flexible noise or disturbance description. Similar bounds on the noise were also used, for example, in \cite{Berberich2019c,Koch2020a,Waarde2020b,Berberich2020}. This quadratic matrix bound can incorporate bounds on sequences ($\|\hat{w}\|_2 \leq \bar{w}$) and bounds on separate components ($\|\hat{w}_k\|_2 \leq \bar{w}$ for all $k$), to name a few exemplary cases.

Due to the presence of noise, there generally exist multiple matrix pairs $(A_d, B_d)$ which are consistent with the data for some noise sequence $W \in \mathcal{W}$. The set of all such matrix pairs consistent with the input-state data and the noise bound is in the following denoted by 
\begin{align*}
\Sigma_{X,U} = \{ (A_{\text{d}},B_{\text{d}}) | X_+ = A_{\text{d}}X+B_{\text{d}}U+ B_w W, W \in \mathcal{W} \}.
\end{align*}
By assumption, this set includes the system of interest $(A,B)$ which generated the data.

To verify that a system~\eqref{eq:sys} is indeed $(Q,S,R)$-dissipative from noisy data, it is necessary to verify that \textit{all} systems that are consistent with the data are $(Q,S,R)$-dissipative.
In the language of~\cite{Waarde2020a}, we verify whether the data are \emph{informative} for dissipativity.
For this, we make use of an equivalent representation of the set $\Sigma_{X,U}$ provided in~\cite{Waarde2020b,Berberich2020}. This new equivalent representation of $\Sigma_{X,U}$ is a key step for retrieving a non-conservative condition on dissipativity and decreasing the conservatism with respect to the results in \cite{Koch2020a}.

\begin{lemma}
\label{lem:para}
It holds that
\begin{align}
\Sigma_{X,U} = \{ (A_{\text{d}},B_{\text{d}}) | 
\begin{pmatrix} A_d^\top \\ B_d^\top \\ I \end{pmatrix}^\top \begin{pmatrix} \bar{Q}_w & \bar{S}_w  \\ \bar{S}_w^\top & \bar{R}_w \end{pmatrix} \begin{pmatrix} A_d^\top \\ B_d^\top \\ I \end{pmatrix} \succeq 0
\}
\label{eq:AdBdT}
\end{align}
with 
\begin{align*}
\bar{Q}_w &= \begin{pmatrix} X \\ U \end{pmatrix} Q_w \begin{pmatrix} X \\ U \end{pmatrix}^\top, \\
\bar{S}_w &= -\begin{pmatrix} X \\ U \end{pmatrix} ( Q_w X_+^\top + S_w B_w^\top ), \\ 
\bar{R}_w &= X_+ Q_w X_+^\top + X_+ S_w B_w^\top  + B_w S_w^\top X_+^\top + B_w R_w B_w^\top.
\end{align*}
\end{lemma}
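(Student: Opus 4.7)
The plan is to reduce the claimed set equality to a single algebraic identity followed by two inclusions. First, I introduce the one-step residual matrix $\Delta(A_d, B_d) := X_+ - A_d X - B_d U$. By directly expanding the block product $\begin{pmatrix} A_d^\top \\ B_d^\top \\ I \end{pmatrix}^\top \begin{pmatrix} \bar{Q}_w & \bar{S}_w \\ \bar{S}_w^\top & \bar{R}_w \end{pmatrix} \begin{pmatrix} A_d^\top \\ B_d^\top \\ I \end{pmatrix}$ using the stated definitions of $\bar{Q}_w, \bar{S}_w, \bar{R}_w$ and regrouping the nine resulting summands by their $A_d, B_d$ dependence, I expect to arrive at the core identity
\begin{equation*}
\begin{pmatrix} A_d^\top \\ B_d^\top \\ I \end{pmatrix}^\top \begin{pmatrix} \bar{Q}_w & \bar{S}_w \\ \bar{S}_w^\top & \bar{R}_w \end{pmatrix} \begin{pmatrix} A_d^\top \\ B_d^\top \\ I \end{pmatrix} = \Delta Q_w \Delta^\top + \Delta S_w B_w^\top + B_w S_w^\top \Delta^\top + B_w R_w B_w^\top.
\end{equation*}
This rewriting is the workhorse of the argument and uses only block-matrix algebra; no system-theoretic reasoning enters yet.

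For the inclusion $\Sigma_{X,U} \subseteq \{\,\cdot\,\}$, I would take $(A_d, B_d) \in \Sigma_{X,U}$ together with a witnessing $W \in \mathcal{W}$, so that $\Delta = B_w W$. Substituting into the identity, the right-hand side factors as $B_w M B_w^\top$, where $M := W Q_w W^\top + W S_w + S_w^\top W^\top + R_w$ is precisely the expression whose positive semidefiniteness characterizes $W \in \mathcal{W}$. Since congruence preserves positive semidefiniteness, the QMI follows in one line.

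For the reverse inclusion, I would exploit $Q_w \prec 0$ as follows. For any $v \in \ker(B_w^\top)$, contracting the QMI with $v$ on both sides kills every term except $v^\top \Delta Q_w \Delta^\top v \geq 0$, which together with $Q_w \prec 0$ forces $\Delta^\top v = 0$. Hence $\mathrm{image}(\Delta) \subseteq \mathrm{image}(B_w)$, and a matrix $W$ with $B_w W = \Delta$ exists. Substituting back into the identity yields $B_w M B_w^\top \succeq 0$, and using the full column rank of $B_w$ (natural in view of Rem.~\ref{rem:bw}) allows me to pull the PSD property through the congruence to conclude $M \succeq 0$, i.e.\ $W \in \mathcal{W}$. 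Combined with $B_w W = \Delta$, this gives $(A_d, B_d) \in \Sigma_{X,U}$. The main obstacle I anticipate is keeping the nine-summand expansion bookkeeping free of sign errors in the first step; the delicate but short reverse inclusion pivots entirely on the $Q_w \prec 0$ assumption to locate $\Delta$ in the range of $B_w$ and on the rank structure of $B_w$ to reverse the congruence.
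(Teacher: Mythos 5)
Your proof is necessarily a different route from the paper's, since the paper does not prove Lem.~\ref{lem:para} at all but defers to \cite[Lem.~4, Rem.~2]{Waarde2020b}; a self-contained verification is therefore welcome. The core identity expressing the quadratic form as $\Delta Q_w\Delta^\top+\Delta S_wB_w^\top+B_wS_w^\top\Delta^\top+B_wR_wB_w^\top$ with $\Delta=X_+-A_dX-B_dU$ is correct (I checked the nine-term expansion), the forward inclusion is a clean one-liner as you say, and using $Q_w\prec0$ on $\ker(B_w^\top)$ to force $\mathrm{image}(\Delta)\subseteq\mathrm{image}(B_w)$ is exactly the right use of that hypothesis.

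The gap is the final step of the reverse inclusion. The paper nowhere assumes $\mathrm{rank}(B_w)=m_w$; Rem.~\ref{rem:bw} merely suggests $B_w=I$ as a default, and in general $B_w\in\mathbb{R}^{n\times m_w}$ is an arbitrary known matrix. Without full column rank, $B_wMB_w^\top\succeq0$ only yields $z^\top Mz\geq0$ for $z$ in the row space of $B_w$, so you cannot conclude $M\succeq0$; moreover, the solution of $B_wW=\Delta$ is then non-unique, and the particular $W$ you fix need not lie in $\mathcal{W}$ even when another solution does. The lemma remains true without the rank assumption, but the witness must be chosen constructively. One fix: complete the square, setting $T:=R_w-S_w^\top Q_w^{-1}S_w$ (which is $\succeq0$ since $\mathcal{W}\neq\emptyset$ by Assumption~1) and $V:=W+S_w^\top Q_w^{-1}$, so that $W\in\mathcal{W}$ reads $V(-Q_w)V^\top\preceq T$ and your identity turns the QMI into $\bar{V}(-Q_w)\bar{V}^\top\preceq B_wTB_w^\top$ with $\bar{V}:=\Delta+B_wS_w^\top Q_w^{-1}$. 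By Douglas' factorization lemma there is a contraction $K$ with $\bar{V}(-Q_w)^{1/2}=B_wT^{1/2}K$; then $V:=T^{1/2}K(-Q_w)^{-1/2}$ satisfies $B_wV=\bar{V}$ and $V(-Q_w)V^\top=T^{1/2}KK^\top T^{1/2}\preceq T$, giving a valid $W\in\mathcal{W}$ with $B_wW=\Delta$. Alternatively, state $\mathrm{rank}(B_w)=m_w$ as an explicit hypothesis, under which your argument is complete as written.
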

\begin{proof}
This statement follows from \cite[Lem.~4 and Rem.~2]{Waarde2020b}.
\end{proof}

The equivalent formulation of $\Sigma_{X,U}$ in Lem.~\ref{lem:para}, which is only based on data and the noise bound, allows us to rewrite the problem in a form such that we can directly apply robust analysis tools from the literature~\cite{Scherer2000}.
 While a similar idea was already exploited in \cite{Koch2020a} to verify and find dissipativity properties from input-state data, only a superset of $\Sigma_{X,U}$ could be considered, hence introducing conservatism. 
 On the contrary, we improve this result in the following by providing non-conservative conditions on dissipativity.

It follows from Lem.~\ref{lem:para} that the set of all LTI systems consistent with the data can be written as
\begin{align*}
x_{k+1} = \begin{pmatrix} A_d & B_d \end{pmatrix} \begin{pmatrix} x_k \\ u_k \end{pmatrix}
\end{align*}
for some $\begin{pmatrix} A_d & B_d \end{pmatrix} \in \Sigma_{X,U}$. 
We can equivalently reformulate this uncertain system as a linear fractional transformation (LFT) \cite{Zhou1996} of a nominal system with the 'uncertainty' $\begin{pmatrix} A_d & B_d \end{pmatrix}$, i.e., 
\begin{align}
\begin{split}
\begin{pmatrix}
x_{k+1} \\ \tilde{z}_k
\end{pmatrix}
= 
\begin{pmatrix}
0  & I \\
I  & 0
\end{pmatrix}
\begin{pmatrix}
\begin{pmatrix} x_k \\ u_k \end{pmatrix} \\ \tilde{w}_k
\end{pmatrix} \\
\tilde{w}_k = \begin{pmatrix} A_d & B_d \end{pmatrix} \tilde{z}_k,
\end{split}
\label{eq:lft}
\end{align}
with $\begin{pmatrix} A_d & B_d \end{pmatrix} \in \Sigma_{X,U}$.
This allows us to apply robust analysis results to guarantee dissipativity properties from noisy input-state trajectories, which is the main contribution in this section. 
To this end, we define
\begin{align}
\begin{pmatrix} \tilde{R} & \tilde{S}^\top \\ \tilde{S} & \tilde{Q} \end{pmatrix} = \begin{pmatrix} R & S^\top \\ S & Q \end{pmatrix}^{-1}
\label{eq:inv3}
\end{align}
assuming that the inverse exists.
\begin{theorem}
Let $\tilde{R} \succeq 0$. If there exists a matrix $P =P^\top \succ 0$, $\tau > 0$ such that \eqref{eq:rob_perf_lmi_dual} holds,
then~\eqref{eq:sys} is $(Q,S,R)$-dissipative for all matrices consistent with the data $(A_d, B_d) \in \Sigma_{X,U}$.
\begin{figure*}
\vspace{2pt}
\begin{align}\label{eq:rob_perf_lmi_dual}
\setstretch{1.25}
\mleft(
\begin{array}{ccc}
\left(I \quad 0 \right) & 0 & C^\top\\
0 &-I & 0\\ \hline
\left( 0 \quad I \right) & 0 & D^{\top} \\ 
0 & 0 &-I\\ \hline
I & 0 & 0\\
0 & I & 0\end{array}
\mright)^\top
\mleft(
\begin{array}{cc|cc|cc}
-P&0&0&0&0&0\\
0&P&0&0&0&0\\\hline
0&0&-\tilde{R}&{-}\tilde{S}^{\top}&0&0\\
0&0&-\tilde{S}&-\tilde{Q}&0&0\\\hline
0&0&0&0& -\tau \bar{Q}_w&{-}\tau \bar{S}_w\\
0&0&0&0&-\tau \bar{S}_w^\top& -\tau \bar{R}_w
\end{array}\mright)
\mleft(
\begin{array}{cccc}
\left(I \quad 0 \right) & 0 & C^\top\\
0 &-I & 0\\ \hline
\left( 0 \quad I \right) & 0 & D^{\top} \\ 
0 & 0 &-I\\ \hline
I & 0 & 0\\
0 & I & 0\end{array}
\mright)
\succ0
\end{align}
\noindent\makebox[\linewidth]{\rule{\textwidth}{0.4pt}}
\end{figure*}
\label{thm:noise}
\end{theorem}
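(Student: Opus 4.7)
The plan is to reduce robust dissipativity to a single LMI by combining three ingredients: (i) Thm.~\ref{thm:diss_lmi} applied to an arbitrary $(A_d,B_d)\in\Sigma_{X,U}$, (ii) the quadratic-matrix-inequality description of $\Sigma_{X,U}$ from Lem.~\ref{lem:para}, and (iii) Scherer's dualization lemma, with $\tilde R\succeq 0$ used to certify the inertia assumptions needed for dualization.

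First, by Thm.~\ref{thm:diss_lmi} it suffices to find a single $P=P^\top\succ 0$ such that the dissipativity LMI~\eqref{eq:diss_lmi} holds with $(A,B)$ replaced by every $(A_d,B_d)\in\Sigma_{X,U}$. I would rewrite that LMI as a quadratic form in the stacked matrix $\bigl(\begin{smallmatrix}A_d^\top\\B_d^\top\\I\end{smallmatrix}\bigr)$, separating the $P$-block (which gives the $\mathrm{diag}(-P,P)$ pattern after using $x_{k+1}=A_dx_k+B_du_k$) from the supply-rate block, which after inserting $\hat Q=C^\top QC$, $\hat S=C^\top S+C^\top QD$, $\hat R=D^\top QD+(D^\top S+S^\top D)+R$ can be written as the congruence of $-\bigl(\begin{smallmatrix}R&S^\top\\S&Q\end{smallmatrix}\bigr)$ through the outer factor built from $I,0,C^\top,D^\top$. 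This is exactly the outer-factor structure that appears in~\eqref{eq:rob_perf_lmi_dual}, except the inner middle block is $-\Pi$ rather than $-\Pi^{-1}=-\bigl(\begin{smallmatrix}\tilde R&\tilde S^\top\\\tilde S&\tilde Q\end{smallmatrix}\bigr)$, signalling that a dualization will be needed.

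Next, using Lem.~\ref{lem:para}, the constraint $(A_d,B_d)\in\Sigma_{X,U}$ is itself a quadratic inequality in $\bigl(\begin{smallmatrix}A_d^\top\\B_d^\top\\I\end{smallmatrix}\bigr)$ with middle matrix $\bigl(\begin{smallmatrix}\bar Q_w&\bar S_w\\\bar S_w^\top&\bar R_w\end{smallmatrix}\bigr)$. The full-block S-procedure then gives a sufficient condition for the implication ``constraint $\Rightarrow$ dissipativity LMI'' in the form of one matrix inequality with a scalar multiplier $\tau>0$, by subtracting $\tau$ times the $\Sigma_{X,U}$-constraint from the dissipativity inequality. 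This yields a \emph{primal} LMI featuring $-P$, $-\Pi$, and $-\tau\bigl(\begin{smallmatrix}\bar Q_w&\bar S_w\\\bar S_w^\top&\bar R_w\end{smallmatrix}\bigr)$ in block-diagonal position.

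Finally, I would apply Scherer's dualization lemma to this primal inequality to obtain~\eqref{eq:rob_perf_lmi_dual}. The lemma replaces each block by its inverse (with appropriate sign flips and image/kernel interchange of the outer factor), which explains precisely the appearance of $-\Pi^{-1}$ and the swap of the $P$-block to $\mathrm{diag}(-P,P)\to\mathrm{diag}(-P^{-1},P^{-1})$ composed with a further congruence producing the form shown. The hypothesis $\tilde R\succeq 0$ is exactly what is needed to guarantee the signature assumption required by Scherer's dualization lemma for the supply-rate block (together with $P\succ 0$ for the storage block and $\tau>0$ with $\bar Q_w$ negative definite via $Q_w\prec 0$ for the noise block). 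Combining these pieces, feasibility of~\eqref{eq:rob_perf_lmi_dual} implies the dualized robust-dissipativity inequality, hence the primal one, hence~\eqref{eq:diss_lmi} for every $(A_d,B_d)\in\Sigma_{X,U}$, and therefore $(Q,S,R)$-dissipativity of all consistent systems by Thm.~\ref{thm:diss_lmi}.

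The main obstacle I expect is the bookkeeping in the dualization step: verifying that the inertia assumptions of Scherer's lemma hold uniformly (this is where $\tilde R\succeq 0$ really earns its keep, as $\Pi$ is generally indefinite), and matching the resulting outer-factor pattern to the specific $C,D$-dependent factor displayed in~\eqref{eq:rob_perf_lmi_dual}. Once this is in place, the rest is an exercise in rearranging congruences.
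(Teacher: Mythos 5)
Your ingredients are exactly the paper's: Thm.~\ref{thm:diss_lmi} applied to every $(A_d,B_d)\in\Sigma_{X,U}$, the quadratic description of $\Sigma_{X,U}$ from Lem.~\ref{lem:para}, the full-block S-procedure, and the dualization lemma with $\tilde R\succeq 0$ supplying the inertia condition; the final chain of implications you state is also the paper's. However, the order in which you propose to deploy the two main tools would not go through as written. You want to apply the S-procedure on the primal side --- ``subtracting $\tau$ times the $\Sigma_{X,U}$-constraint from the dissipativity inequality'' --- and only then dualize the aggregated inequality. But the primal dissipation LMI~\eqref{eq:diss_lmi} is an $(n+m)\times(n+m)$ quadratic form in the factor $\bigl(\begin{smallmatrix}I&0\\A_d&B_d\\0&I\\C&D\end{smallmatrix}\bigr)$, whereas the constraint~\eqref{eq:AdBdT} is an $n\times n$ quadratic form in the transposed factor $\bigl(\begin{smallmatrix}A_d^\top\\B_d^\top\\I\end{smallmatrix}\bigr)$; these quadratic forms live in different spaces and cannot be combined by a scalar multiplier. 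The multiplier can only be attached once both the dissipativity condition and the uncertainty description are expressed in the transposed variables, i.e., in the dual domain.

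The paper therefore runs the argument in the opposite order: \eqref{eq:rob_perf_lmi_dual} is read as an S-procedure certificate in the dual domain, yielding for every $(A_d,B_d)\in\Sigma_{X,U}$ a two-block (storage plus supply) dual QMI with outer factor built from $A_d^\top,B_d^\top,C^\top,D^\top$ and middle block $\mathrm{diag}(-P,P)\oplus(-\Pi^{-1})$; the dualization lemma is then applied \emph{pointwise} to each such inequality --- not to an aggregated three-block inequality --- to recover the primal dissipation LMI with storage matrix $P^{-1}$. A tell-tale sign that the multiplier block is never dualized is that $\bar Q_w,\bar S_w,\bar R_w$ appear in \eqref{eq:rob_perf_lmi_dual} un-inverted; under your ordering the dualization would invert the full middle block, multiplier included, and produce a different certificate. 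Relatedly, $Q_w\prec 0$ is used for Lem.~\ref{lem:para} and the applicability of the S-procedure, not as an inertia condition for dualization as you suggest. Reordering your plan to ``dualize pointwise first, then apply the S-procedure in the transposed variables'' fixes the argument and lands exactly on the paper's proof.
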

\begin{proof}
By the full-block S-procedure \cite{Scherer2001} and using Lem.~\ref{lem:para}, \eqref{eq:rob_perf_lmi_dual} implies that 
\begin{align*}
\setstretch{1.2}
\setlength\arraycolsep{4pt}
\mleft(
\begin{array}{cccc}
\star & \star \\ 
\star & \star \\ \hline
\star & \star \\ 
\star & \star
\end{array}
\mright)^\top
\mleft(
\begin{array}{cc|cc}
-P&0&0&0\\
0&P&0&0\\\hline
0&0&-\tilde{R}&{-}\tilde{S}^{\top}\\
0&0&-\tilde{S}&-\tilde{Q}
\end{array}\mright)
 \mleft(
\begin{array}{cccc}
A_{\text{d}}^\top & C^\top \\ 
- I & 0 \\ \hline
B_{\text{d}}^\top & D^\top \\ 
0 & - I
\end{array}
\mright) \succ 0
\end{align*}
holds for all $(A_{\text{d}},B_{\text{d}}) \in \Sigma_{X,U}$.
Using the dualization lemma \cite{Scherer2000}, this in turn proves that 
\begin{align*}
\setstretch{1.2}
\setlength\arraycolsep{3.5pt}
\mleft(
\begin{array}{cccc}
\star & \star \\ 
\star & \star \\ \hline
\star & \star \\ 
\star & \star
\end{array}
\mright)^\top
\mleft(
\begin{array}{cc|cc}
-P^{-1}&0&0&0\\
0&P^{-1}&0&0\\\hline
0&0&-R&-S^{\top}\\
0&0&-S&-Q
\end{array}\mright)
 \mleft(
\begin{array}{cccc}
I & 0 \\
A_{\text{d}} & B_{\text{d}} \\ \hline
0 & I \\ 
C & D 
\end{array}
\mright) \prec 0
\end{align*}
holds for all $(A_{\text{d}},B_{\text{d}}) \in \Sigma_{X,U}$.
By Thm.~\ref{thm:diss_lmi}, this implies that \eqref{eq:sys} is $(Q,S,R)$-dissipative for all matrices consistent with the data $(A_d, B_d) \in \Sigma_{X,U}$, which concludes the proof.
\end{proof}

\begin{remark}
While for Thm.~\ref{thm:1}, $P\nsucceq0$ corresponds to cyclo-dissipativity, $P\succeq0$ (as stated) to $(Q,S,R)$-dissipativity and choosing $P\succ0$ together with a strict inequality in \eqref{eq:opt_allg} to strict dissipativity, we need to restrict our attention to $P \succ 0$ in Thm.~\ref{thm:noise} due to the dualization step.
\end{remark}

\begin{remark}
Thm.~\ref{thm:noise} provides a powerful tool to verifiy dissipativity properties from noisy input-state measurements, based on a simple LMI~\eqref{eq:rob_perf_lmi_dual}.
It verifies dissipativity for a tight description of the systems consistent with the data, i.e. in a non-conservative way.
Further reducing conservatism, e.g., by considering parameter-dependent storage functions, is an interesting issue for future research.
\end{remark}
Compared to the approach presented in~\cite{Koch2020a}, Thm.~\ref{thm:noise} provides a computationally less expensive approach for robust analysis of dissipativity from data, as the number of decision variables does not grow with the data length.

\begin{remark}
Since the definiteness condition in \eqref{eq:rob_perf_lmi_dual} is linear in the matrices $(Q,S,R)$, optimizing over specific dissipativity related parameters or the matrices $Q$,  $S$ or $R$ can be done via a simple SDP. For finding the operator gain $\gamma$, for example, choose $\tilde{R}=\frac{1}{\gamma^2} I$, $\tilde{S}=0$ and $\tilde{Q}=-I$ and minimize $-\frac{1}{\gamma^2}$ such that~\eqref{eq:rob_perf_lmi_dual} holds. 
\end{remark}

\begin{example}
\label{ex:1}
To illustrate the introduced approach, we apply it to a numerical example. We choose a randomly generated system of order
$n=5$ with two inputs and outputs $m=p=2$. We simulate a trajectory with $u_k$, $k=0,\dots,N$, uniformly sampled in $[-1,1]$ for different lengths $N$.
We sample the disturbance $\hat{w}_k$ uniformly from the ball $\|\hat{w}_k\|_2 \leq \bar{w}$ for all $k=1, \dots, N$, where $\bar{w} = 0.01$. This implies a bound on the measurement noise given by $\hat{W} \hat{W}^\top \preceq \bar{w}^2 N I$ for the respective data length $N$. The true operator gain of the randomly generated system is $\gamma_{\text{true}} = 11.44$. Applying the results from Thm.~\ref{thm:noise} by using~\eqref{eq:rob_perf_lmi_dual}, we retrieve an upper bound on the operator gain that is guaranteed for all systems consistent with the data. 
The resulting upper bounds on the operator gain for different data lengths $N$, are depicted in Fig.~\ref{fig:ex1}.
\begin{figure}[h]
\begin{tikzpicture}
\begin{axis}[%
width=0.35\textwidth,
height=0.2\textwidth,
at={(0cm,0cm)},
scale only axis,
xmin=7,
xmax=55,
xlabel={$N$},
ymin=0,
ymax=200,
ylabel={$\hat{\gamma}$},
axis background/.style={fill=white},
ymajorgrids,
]
\addplot [only marks,color=mycolor1,mark size=1.5pt,mark=*,mark options={solid},forget plot]
  table[row sep=crcr]{%
7 132.3\\
8 106.0\\
9 84.9\\
10 193.6\\
11 41.6\\
12 93.2\\
13 177.4\\
14 53.8\\
15 14.8\\
16 15.8\\
17 60.1\\
18 110.8\\
19 14.9\\
20 83.9\\
21 17.0\\
22 56.4\\
23 25.0\\
24 14.9\\
25 24.5\\
26 15.3\\
27 16.1\\
28 13.0\\
29 24.1\\
30 15.2\\
31 13.0\\
32 20.3\\
33 14.2\\
34 15.0\\
35 62.4\\
36 13.4\\
37 16.1\\
38 19.1\\
39 17.6\\
40 14.6\\
41 13.6\\
42 14.4\\
43 15.5\\
44 15.8\\
45 15.8\\
46 12.5\\
47 13.7\\
48 13.8\\
49 15.2\\
50 16.5\\
51 17.2\\
52 13.6\\
53 15.3\\
54 17.4\\
55 13.0\\  
56 13.4\\
57 15.3\\
58 25.6\\
59 13.3\\
60 39.4\\  
};

\addplot [color=orange,solid,forget plot,line width=1pt]
  table[row sep=crcr]{%
0	11.44\\
60 11.44\\
};
\end{axis}
\end{tikzpicture}%
\caption{Guaranteed upper bound on the operator gain from noisy input-state trajectories of different length $N$ for Ex.~\ref{ex:1}. 
}
\label{fig:ex1}
\end{figure}
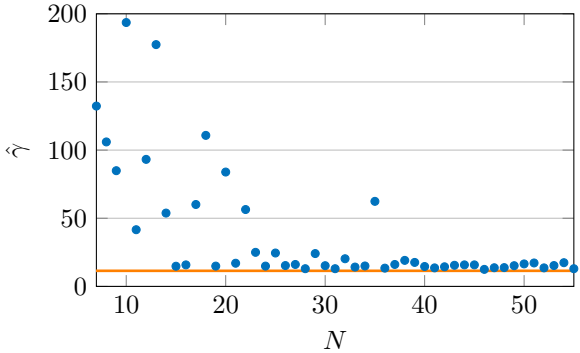

Next, we take the same example and increase the noise level $\bar{w}$ from $0.001$ to $0.02$ for $N=50$ data points each and again apply the result from Thm.~\ref{thm:noise}. The resulting upper bounds on the operator gain are depicted in Fig.~\ref{fig:ex1_noise}. 
\begin{figure}
\begin{tikzpicture}
\begin{axis}[%
width=0.35\textwidth,
height=0.2\textwidth,
at={(0cm,0cm)},
scale only axis,
xmin=0,
xmax=0.02,
xtick scale label code/.code={$\bar{w}$ $(10^{#1}$) \quad \quad \quad \quad \quad \quad \quad \quad \quad \quad \quad \quad \quad},
ymin=10,
ymax=58,
ylabel={$\hat{\gamma}$},
axis background/.style={fill=white},
ymajorgrids,
]
\addplot [only marks,color=mycolor1,mark size=1.5pt,mark=*,mark options={solid},forget plot]
  table[row sep=crcr]{%
0.001 11.56\\
0.002 11.69\\
0.003 11.83\\
0.004 11.98\\
0.005 12.15\\
0.006 12.34\\
0.007 12.54\\
0.008 12.78\\
0.009 13.05\\
0.010 13.37\\
0.011 13.76\\
0.012 14.22\\
0.013 14.81\\
0.014 15.57\\
0.015 16.59\\
0.016 18.05\\
0.017 20.27\\
0.018 24.07\\
0.019 31.85\\
0.020 55.86\\
};

\addplot [color=orange,solid,forget plot,line width=1pt]
  table[row sep=crcr]{%
0	11.44\\
0.02 11.44\\
};
\end{axis}
\end{tikzpicture}%
\caption{Guaranteed upper bounds on the operator gain of the system in Ex.~\ref{ex:1} from noisy input-state trajectories for increasing noise levels $\bar{w}$ and $N=50$ data points. }
\label{fig:ex1_noise}
\end{figure}

The results in both figures (Fig.~\ref{fig:ex1} and Fig.~\ref{fig:ex1_noise}) 
are generally well aligned with the theoretical guarantees. The computed  
$\hat{\gamma}$ is indeed always an upper bound on the true operator gain. Furthermore, it can be seen that for increasing noise bounds, the result becomes more conservative, as can be expected. More data points, i.e. longer trajectories, on the other hand, generally improve the result and make the bound tighter.
\end{example}

Thm.~\ref{thm:noise} can be seen as a counterpart to the data-driven controller design presented in \cite{Waarde2020b,Berberich2020}, focusing on data-driven system analysis instead. By applying the results of \cite{Berberich2020}, it is also possible to include prior model knowledge into the computation, if available.

As we discuss above, the result in Thm.~\ref{thm:noise} is powerful, being non-conservative and computationally simple. 
However, it also requires the availability of state measurements, which can be restrictive in practice, where often only input-output data are available.
In the next section, we extend the results of Sec.~\ref{sec:diss} to an input-output setting.
\section{Dissipativity from input-output trajectories}
\label{sec:output}
Instead of input and state measurements, we consider in this section the case where only an input-output sequence $\{u_k,y_k\}_{k=0}^{N-1}$ of~\eqref{eq:sys} is available, and we use this sequence to verify dissipativity properties.
We start by defining the lag of a system. 
\begin{definition}
The lag $\underline{l}$ of system~\eqref{eq:sys} is the smallest integer $l \in \mathbb{N}_{+}$ such that the observability matrix given by 
\begin{align*}
\mathcal{O}_l \coloneqq \begin{pmatrix} C \\ CA \\ \vdots \\ CA^{l-1}
\end{pmatrix}
\end{align*}
has rank $n$.
\label{def:lag}
\end{definition}
In this section, we use an extended state, based on $\underline{l}$ consecutive inputs and outputs, in order to verify dissipativity properties.
The following lemma shows that this is in principle possible since the corresponding stacked system has the same input-output behavior as~\eqref{eq:sys}.
\begin{lemma}
\label{lem:extended}
Let $l \geq \underline{l}$. Then there exists a system $\widetilde{G}$ with matrices $\widetilde{A},\widetilde{B},\widetilde{C},\widetilde{D}$ which can explain the data $\{u_k\}_{k=0}^{N-1}$, $\{y_k\}_{k=0}^{N-1}$, i.e., there exists $\xi_0$ such that for $k=0,\dots,N-1$,
\begin{align}\label{eq:sys_stacked}
\xi_{k+1} = \widetilde{A} \xi_k + \widetilde{B} u_k, \quad
y_k = \widetilde{C} \xi_k + \widetilde{D} u_k,
\end{align}
where the extended state is defined by
\begin{align*}
{\scriptsize
\xi_{k} = \begin{pmatrix} u_{k-l}^\top & u_{k-l+1}^\top & \cdots & u_{k-1}^\top & y_{k-l}^\top & y_{k-l+1}^\top & \cdots & y_{k-1}^\top \end{pmatrix}^\top.}
\end{align*} 
\end{lemma}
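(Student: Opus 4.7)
My plan is to construct $\widetilde{A}, \widetilde{B}, \widetilde{C}, \widetilde{D}$ and a compatible initial state $\xi_0$ explicitly, exploiting that the true state $x_k$ is recoverable as a linear function of the past $l$ inputs and outputs whenever $l \geq \underline{l}$. Concretely, since $\mathcal{O}_l$ has full column rank $n$ it admits a left inverse $\mathcal{O}_l^{\dagger}$, and iterating~\eqref{eq:sys} yields the standard relation
\begin{equation*}
\begin{pmatrix} y_{k-l} \\ \vdots \\ y_{k-1} \end{pmatrix} = \mathcal{O}_l\, x_{k-l} + T_l \begin{pmatrix} u_{k-l} \\ \vdots \\ u_{k-1} \end{pmatrix},
\end{equation*}
where $T_l$ is the lower-triangular Toeplitz matrix of Markov parameters. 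Solving for $x_{k-l}$ and then propagating $l$ steps forward via the usual reachability recursion $x_k = A^l x_{k-l} + \sum_{i=0}^{l-1} A^{l-1-i} B u_{k-l+i}$ produces a linear map $M$ with $x_k = M\xi_k$, and hence $y_k = (CM)\xi_k + D u_k$. This identifies the natural candidates $\widetilde{C} := CM$ and $\widetilde{D} := D$.

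I would then take $\widetilde{A}, \widetilde{B}$ to implement the shift-and-append structure of $\xi_{k+1}$: all but the last row-block of each matrix are trivial shift operators copying the corresponding entries of $\xi_k$ (or inserting $u_k$ into the $u$-block), while the last $p$ rows of $\widetilde{A}$ and $\widetilde{B}$ are set equal to $\widetilde{C}$ and $\widetilde{D}$ respectively, producing the appended entry $y_k = \widetilde{C}\xi_k + \widetilde{D} u_k$. By construction, whenever $x_k = M\xi_k$ holds at some time $k$, the recursion propagates this identity to $k+1$, and the output equation reproduces $y_k = C x_k + D u_k$.

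The main obstacle is exhibiting a $\xi_0$ with $M\xi_0 = x_0$, since $\xi_0$ parametrises a ``virtual past'' that is not contained in the measured data. I would resolve this by observing that $\operatorname{range}(M) = \operatorname{range}(A^l) + \mathcal{C}_l$, where $\mathcal{C}_l$ denotes the $l$-step reachable subspace, and that by controllability of the minimal realisation together with the decomposition $\mathcal{C}_n = \mathcal{C}_l + A^l \mathcal{C}_{n-l}$ one obtains $\mathbb{R}^n = \mathcal{C}_n \subseteq \mathcal{C}_l + \operatorname{range}(A^l)$. Thus $M$ is surjective, and picking any $\xi_0$ with $M\xi_0 = x_0$ closes a straightforward induction in $k$ showing that $\widetilde{G}$ reproduces the measured trajectory exactly.
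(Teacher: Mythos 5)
Your construction follows the same route as the paper's proof: left-invert $\mathcal{O}_l$ to recover $x_{k-l}$ from the window $\xi_k$, propagate forward to obtain $x_k = M\xi_k$, and wrap the shift-and-append dynamics around $\widetilde{C}=CM$, $\widetilde{D}=D$ (the paper's $\widetilde{C}$ is exactly $CA^lT + (CA^{l-1}B\ \cdots\ CB\ \ 0\cdots 0)$, which is your $CM$). Your extra attention to the existence of $\xi_0$ is welcome --- the paper leaves this implicit --- and the subspace identity $\operatorname{range}(A^l)+\mathcal{C}_l=\mathbb{R}^n$ under controllability is correct (for $l>n$ it is immediate since $\mathcal{C}_l=\mathbb{R}^n$). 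However, your closing induction has a genuine gap: the invariant ``$x_k=M\xi_k$'' does \emph{not} propagate under the shift-and-append recursion for an arbitrary $\xi_0$ in the fiber $M^{-1}(x_0)$. The map $M$ applies the left inverse $\mathcal{O}_l^{\dagger}$, which returns the correct state only when the residual $Y_k - T_l U_k$ (with $U_k,Y_k$ the stacked input and output blocks of $\xi_k$) actually lies in $\operatorname{range}(\mathcal{O}_l)$, i.e., when the window is a genuine $l$-step trajectory of $G$. When $pl>n$ the fiber $M^{-1}(x_0)$ contains non-genuine windows, and for those $M\xi_1$ need not equal $x_1$. A scalar example makes this concrete: for $n=m=p=1$, $l=2$, one admissible left inverse is $\mathcal{O}_2^{\dagger}=\bigl(1/c \quad 0\bigr)$, so $M$ ignores $y_{-1}$ entirely; the constraint $M\xi_0=x_0$ then says nothing about $y_{-1}$, yet $M\xi_1$ depends on it, so $y_1=CM\xi_1+Du_1$ is generically wrong.

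The repair is cheap and uses exactly the ingredient you already invoke. Instead of picking \emph{any} $\xi_0$ with $M\xi_0=x_0$, pick a \emph{genuine} virtual past: choose $x_{-l}$ and inputs $u_{-l},\dots,u_{-1}$ with $A^l x_{-l}+\sum_{i=0}^{l-1}A^{l-1-i}Bu_{-l+i}=x_0$ (possible precisely because $\operatorname{range}(A^l)+\mathcal{C}_l=\mathbb{R}^n$), and let $y_{-l},\dots,y_{-1}$ be the outputs that $G$ actually produces along that segment. Then strengthen the induction invariant to ``$\xi_k$ is an $l$-step input--output window of a trajectory of $G$ whose terminal state is $x_k$.'' This invariant is preserved by the recursion --- the appended sample $y_k=CM\xi_k+Du_k=Cx_k+Du_k$ extends the genuine trajectory by one step --- and it implies $M\xi_k=x_k$ at every $k$, which is what the output equation needs. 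With that modification your argument is complete and, on the initial-condition point, more explicit than the paper's own proof.
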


While Lem.~\ref{lem:extended} is a well-known fact, we nevertheless add a proof in the appendix for completeness and to provide some intuition.

The converse of Lem.~\ref{lem:extended} follows trivially from its proof and the constructed extended system in \eqref{eq:extended_sys}: All input-output trajectories of the extended system~\eqref{eq:sys_stacked} with zero initial condition $\xi_0 = 0$ (or $\xi_0 \in \mathcal{X}_\xi$, where $\mathcal{X}_\xi$ denotes the set of reachable states) are also input-output trajectories of the system~\eqref{eq:sys}.
We can hence conclude that 
the extended system~\eqref{eq:sys_stacked} has the same input-output behavior as \eqref{eq:sys} if the initial condition $\xi_0$ is restricted to the set of reachable states.  
This implies that both systems have the same input-output behavior for zero initial condition $\xi_0 = 0, x_0 = 0$. 
Together with Thm.~\ref{thm:Hill}, this in turn implies that dissipativity of the extended system~\eqref{eq:sys_stacked} (if the initial condition is reachable) 
is equivalent to dissipativity of \eqref{eq:sys}. 
Hence, using Lem.~\ref{lem:extended}, we can reduce the problem of verifying dissipativity from input-output trajectories to the problem of verifying dissipativity from input-state trajectories of the potentially non-minimal system~\eqref{eq:sys_stacked}.

\begin{remark}
In a purely data-driven setup, knowledge on the lag $\underline{l}$ is often not available. However, as shown above, it is sufficient for the purpose of this section to have an upper bound on $l$ or even an upper bound an $n$ since $\underline{l} \leq n$. 
\end{remark}
Similar to the results of this section,~\cite{Persis2019} uses such an extended state to design data-driven controllers but, instead of the lag $\underline{l}$, the system order $n$ is used. For MIMO systems, this can result in significantly larger state dimensions. Furthermore, \cite{Persis2019} assumes controllability of the extended system, which is generally not the case, unless $l=n$ and SISO systems are considered.

Before stating our main result of this section, we recall the Fundamental Lemma introduced in \cite{Willems05}: 
\begin{lemma}[Fundamental Lemma]
\label{lem:fundamental}
Suppose $\{u_k,y_k\}_{k=0}^{N-1}$ is a trajectory of a controllable LTI system $G$, where $u$ is persistently exciting of order $l+n$.
Then, $\{\bar{u}_k,\bar{y}_k\}_{k=0}^{l-1}$ is a trajectory of $G$ if and only if there exists $\alpha\in\mathbb{R}^{N-l+1}$ such that
\begin{align*}
\begin{bmatrix}H_l(u)\\H_l(y)\end{bmatrix}\alpha
=\begin{bmatrix}\bar{u}\\\bar{y}\end{bmatrix}.
\end{align*}
\end{lemma}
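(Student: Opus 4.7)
My plan is to prove the Fundamental Lemma by a straightforward subspace/dimension argument: identify the column span of the stacked Hankel matrix with the set of length-$l$ input-output trajectories of $G$ starting from an arbitrary initial state, and observe both are linear subspaces of the same dimension $n+lm$.

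The easy (sufficiency) direction first. Each column of $\begin{bmatrix}H_l(u)\\H_l(y)\end{bmatrix}$ is, by construction, a length-$l$ segment of the measured trajectory, hence a length-$l$ trajectory of $G$ (with some initial state). Since $G$ is LTI, its set of length-$l$ trajectories is a linear subspace of $\mathbb{R}^{l(m+p)}$; any linear combination of columns is therefore again a trajectory. So existence of $\alpha$ implies $(\bar u,\bar y)$ is a trajectory of $G$.

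For the necessity direction I would work in state space. Let $\{x_k\}$ be the state sequence corresponding to the measured input-output trajectory. The key step is to show the rank identity
\begin{equation*}
\mathrm{rank}\begin{bmatrix} H_1(x_0,\ldots,x_{N-l})\\ H_l(u)\end{bmatrix}=n+lm.
\end{equation*}
This is exactly the rank condition proved in \cite{Willems05} (and reproduced in the input-state setting as Lem.~\ref{lem:pe} above): persistency of excitation of $u$ of order $l+n$, together with controllability of $G$, guarantees that the ``state-input'' Hankel block has $n+lm$ linearly independent rows. I would either invoke this result directly or sketch it: persistency of excitation of order $l+n$ ensures $H_{l+n}(u)$ has full row rank; a standard decomposition using the Kalman controllability matrix and controllability of $(A,B)$ then yields that the extra $n$ rows contributed by the state are linearly independent of the $lm$ input rows.

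Given this rank identity, the rest is bookkeeping. The column space of $\begin{bmatrix}H_1(x)\\ H_l(u)\end{bmatrix}$ equals all of $\mathbb{R}^{n+lm}$, i.e.\ every pair $(x_0,u_{0:l-1})\in\mathbb{R}^n\times\mathbb{R}^{lm}$ is realized by some choice of $\alpha$. Applying the linear map $(x_0,u_{0:l-1})\mapsto (u_{0:l-1},y_{0:l-1})$ induced by iterating the state equation and the output equation of $G$ shows that the column span of $\begin{bmatrix}H_l(u)\\H_l(y)\end{bmatrix}$ contains every length-$l$ trajectory of $G$. The reverse containment was the easy direction. Hence the two subspaces coincide, and the desired $\alpha$ exists for every trajectory $(\bar u,\bar y)$.

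The main obstacle, as usual in this lemma, is the rank identity for the stacked state/input Hankel matrix; everything else is a dimension count. Since this rank result is already established in \cite{Willems05} and implicitly used earlier in the paper, I would quote it rather than reprove it, keeping the proof of the Fundamental Lemma as it is stated here — a short linear-algebraic argument.
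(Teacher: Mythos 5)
The paper does not prove this lemma at all: it is stated as a recalled result and attributed to \cite{Willems05}, so there is no in-paper proof to compare against. Your sketch is the standard state-space proof of the Fundamental Lemma and is essentially correct: the sufficiency direction (columns of the stacked Hankel matrix are trajectories, and trajectories of an LTI system form a subspace) is immediate, and you correctly identify the one nontrivial ingredient, namely the rank identity $\mathrm{rank}\begin{bmatrix} H_1(x_0,\ldots,x_{N-l})\\ H_l(u)\end{bmatrix}=n+lm$, which is exactly what persistency of excitation of order $l+n$ plus controllability delivers in \cite{Willems05} (Lem.~\ref{lem:pe} is its depth-one instance). One small imprecision: your opening claim that both subspaces have dimension $n+lm$ holds only when $l$ is at least the lag $\underline{l}$ (otherwise the map $(x_0,u_{0:l-1})\mapsto(u_{0:l-1},y_{0:l-1})$ is not injective and the trajectory space has smaller dimension); but your actual argument never uses this dimension count --- it only uses surjectivity of that linear map from the full column space onto the set of length-$l$ trajectories --- so the proof goes through unchanged. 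Given that the heavy lifting is the cited rank result, quoting it rather than reproving it is exactly what the paper's authors implicitly do as well.
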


Note that the Fundamental Lemma requires controllability. However, since system~\eqref{eq:sys} that generated the data is assumed to be minimal, we can apply Lem.~\ref{lem:fundamental} to describe all input-output trajectories of~\eqref{eq:sys} and hence also of the extended system for $\xi_0 \in \mathcal{X}_\xi$ applying Lem.~\ref{lem:extended}.
Using Lem.~\ref{lem:extended} and Lem.~\ref{lem:fundamental}, we can hence determine dissipativity properties from input-output trajectories.
For this, 
we collect the extended state data analogously to Sec.~\ref{sec:diss} in the following form
\begin{align*}
\Xi &\coloneqq \begin{pmatrix} \xi_l & \xi_{l+1} & \cdots & \xi_{N-1} \end{pmatrix}, \\
\Xi_+ &\coloneqq \begin{pmatrix} \xi_{l+1} & \xi_{l+2} & \cdots & \xi_{N} \end{pmatrix}, \\ 
Y_\Xi &\coloneqq \begin{pmatrix} y_{l} & y_{l+1} & \cdots & y_{N-1} \end{pmatrix}, \\
U_\Xi &\coloneqq \begin{pmatrix} u_{l} & u_{l+1} & \cdots & u_{N-1} \end{pmatrix},
\end{align*}
which directly leads us to the main result of this section.

\begin{theorem}
\label{thm:3}
Given an input-output trajectory $\{u_k, y_k\}_{k=0}^{N-1}$ of a controllable LTI system $G$ of the form~\eqref{eq:sys} with lag $\underline{l}$. Let $l \geq \underline{l}$ and consider the feasibility problem to find $P=P^\top\succeq0$ such that
\begin{align}
\begin{split}
&\Xi_+^\top P \Xi_+ - \Xi^\top P \Xi \\
&- Y_\Xi^\top Q Y_\Xi  
- Y_\Xi^\top SU_\Xi - (SU_\Xi)^\top Y_\Xi - U_\Xi^\top R U_\Xi \preceq 0.
\end{split}
\label{eq:opt_output}
\end{align}
\begin{enumerate}
\item If there exists a $P=P^\top\succeq0$ such that \eqref{eq:opt_output} holds  and additionally $\{u_k\}_{k=0}^{N-1}$ is persistently exciting of order $n+l+1$, then $G$ is $(Q,S,R)$-dissipative.
\item If there exists no $P=P^\top\succeq0$ such that \eqref{eq:opt_output} holds, then $G$ is not $(Q,S,R)$-dissipative.
\end{enumerate} 
\end{theorem}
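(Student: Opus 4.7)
The plan is to reduce the input-output setting to the input-state case of Theorem~\ref{thm:1} by passing to the extended-state realization $\widetilde{G}=(\widetilde{A},\widetilde{B},\widetilde{C},\widetilde{D})$ supplied by Lemma~\ref{lem:extended}. Whenever $\xi_0$ lies in the reachable set $\mathcal{X}_\xi$ of this realization, the input-output behavior of $\widetilde{G}$ coincides with that of $G$; since $\xi_0=0$ is reachable, Theorem~\ref{thm:Hill} implies that $G$ is $(Q,S,R)$-dissipative if and only if $\widetilde{G}$ satisfies the dissipation inequality along all trajectories with reachable initial condition. The strategy is to show that \eqref{eq:opt_output} encodes precisely this property, transported to the data matrices.

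For part (1), I would substitute $\Xi_+=\widetilde{A}\Xi+\widetilde{B}U_\Xi$ and $Y_\Xi=\widetilde{C}\Xi+\widetilde{D}U_\Xi$ into \eqref{eq:opt_output}, rewriting it as
\begin{equation*}
\begin{pmatrix}\Xi\\U_\Xi\end{pmatrix}^\top M_{\widetilde{G}}(P)\begin{pmatrix}\Xi\\U_\Xi\end{pmatrix}\preceq 0,
\end{equation*}
where $M_{\widetilde{G}}(P)$ is the block matrix appearing in Theorem~\ref{thm:diss_lmi}(c) instantiated for $\widetilde{G}$. The next step is to argue that, under persistency of excitation of order $n+l+1$, the columns of $\begin{pmatrix}\Xi^\top & U_\Xi^\top\end{pmatrix}^\top$ span the entire product space $\mathcal{X}_\xi\times\mathbb{R}^m$. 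This is where the Fundamental Lemma (Lem.~\ref{lem:fundamental}) enters: applied to the controllable minimal system $G$ on windows of length $l+1$, it tells us that every realizable length-$(l+1)$ input-output trajectory, and hence every admissible pair $(\xi,u)$ of the extended system, is a linear combination of the data columns. Consequently $M_{\widetilde{G}}(P)$ is negative semidefinite when restricted to $\mathcal{X}_\xi\times\mathbb{R}^m$, which together with $V(\xi)=\xi^\top P\xi\geq 0$ establishes the dissipation inequality along every trajectory of $\widetilde{G}$ with reachable initial condition, and therefore $(Q,S,R)$-dissipativity of $G$ via the reduction above.

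For part (2), I would argue by contrapositive: if $G$ is $(Q,S,R)$-dissipative, Theorem~\ref{thm:diss_lmi} supplies a matrix $P_G\succeq 0$ such that $V_G(x)=x^\top P_G x$ is a valid storage function. From the construction of $\widetilde{G}$ in Lemma~\ref{lem:extended}, the extended state $\xi$ determines the underlying state of $G$ via a linear map $T:\mathcal{X}_\xi\to\mathbb{R}^n$ built from the observability matrix; setting $P=T^\top P_G T\succeq 0$ yields a storage function for $\widetilde{G}$ whose dissipation inequality, evaluated columnwise along the measured trajectory, is exactly \eqref{eq:opt_output}. The main obstacle I anticipate is establishing the spanning property in part (1) cleanly despite the non-controllability of $\widetilde{G}$: one must combine the Fundamental Lemma for the underlying minimal system with the observability-based identification of extended states to show that data richness for $G$ translates into full coverage of the reachable subspace of $\widetilde{G}$ together with arbitrary inputs, avoiding any rank deficiency that would block the implication from the data inequality to $M_{\widetilde{G}}(P)\preceq 0$ on $\mathcal{X}_\xi\times\mathbb{R}^m$.
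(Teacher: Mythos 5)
Your proposal is correct and follows essentially the same route as the paper's proof: passing to the extended realization of Lemma~\ref{lem:extended}, substituting $\Xi_+=\widetilde{A}\Xi+\widetilde{B}U_\Xi$ to recover the LMI of Theorem~\ref{thm:diss_lmi} restricted to the data, invoking the Fundamental Lemma under persistency of excitation of order $n+l+1$ to show that $\bigl(\Xi^\top \; U_\Xi^\top\bigr)^\top$ spans all input-state trajectories of the (non-controllable) extended system, and proving part (2) by contraposition via the transformation $T$ mapping reachable extended states to states of $G$. The obstacle you flag at the end is exactly the point the paper resolves in the same way, so no gap remains.
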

\begin{proof} 
1) First, we notice that the 
data matrix $\Xi$ can be written as 
\begin{align*}
\Xi = \begin{pmatrix} H_l (\{u_k\}_{k=0}^{N-2}) \\ H_l (\{y_k\}_{k=0}^{N-2}) \end{pmatrix} = \begin{pmatrix}
u_0 & u_{1} & \cdots & u_{N-l-1} \\
u_1 & u_{2} & \cdots & u_{N-l} \\ 
\vdots & & & \vdots \\
u_{l-1} & u_{l} & \cdots & u_{N-2} \\
y_{0} & y_{1} & \cdots & y_{N-l-1} \\
y_{1} & y_{2} & \cdots & y_{N-l} \\
\vdots & & & \vdots \\
y_{l-1} & y_{l} & \cdots & y_{N-2} \\
\end{pmatrix}.
\end{align*}
Since there exists a controllable realization (of order $n$) with the same input-output behavior as the extended system, 
the Fundamental Lemma implies that the image of 
$\Xi$ spans the whole reachable state space of the extended system $\mathcal{X}_\xi$. More specifically, if $\{u_k\}_{k=0}^{N-1}$ is persistently exciting of order $n+l$, then Lem.~\ref{lem:fundamental} guarantees that the columns in $\Xi$ span all possible input-output trajectories of the system $G$, and hence the whole reachable state space of the extended system~\eqref{eq:sys_stacked}.
If $\{u_k\}_{k=0}^{N-1}$ is persistently exciting of order $n+l+1$, 
then it additionally holds that $\begin{pmatrix}\Xi\\U_\Xi\end{pmatrix}$ spans the space of all input-state trajectories of~\eqref{eq:sys_stacked}.

Using $\Xi_+=\tilde{A}\Xi+\tilde{B}U_\Xi$ and rearranging~\eqref{eq:opt_output}, we obtain
\begin{align}
\begin{pmatrix} \Xi \\ U_\Xi \end{pmatrix}^\top 
\begin{pmatrix} \widetilde{A}^\top P \widetilde{A} - P - \hat{Q} & \widetilde{A}^\top P \widetilde{B} - \hat{S} \\
(\widetilde{A}^\top P \widetilde{B} - \hat{S} )^\top & -\hat{R} + \widetilde{B}^\top P \widetilde{B} \end{pmatrix} 
 \begin{pmatrix} \Xi \\ U_\Xi \end{pmatrix} \preceq 0,
\label{eq:proof11}
\end{align}
with $\hat{Q}$, $\hat{S}$, $\hat{R}$ similar as in \eqref{eq:diss_lmi}.
Since $\begin{pmatrix}\Xi\\U_\Xi\end{pmatrix}$ spans the space of all input-state trajectories, this implies
\begin{align}
\begin{pmatrix} \xi_k \\ u_k \end{pmatrix}^\top 
\begin{pmatrix} \widetilde{A}^\top P \widetilde{A} - P - \hat{Q} & \widetilde{A}^\top P \widetilde{B} - \hat{S} \\
(\widetilde{A}^\top P \widetilde{B} - \hat{S} )^\top & -\hat{R} + \widetilde{B}^\top P \widetilde{B} \end{pmatrix} 
 \begin{pmatrix} \xi_k \\ u_k \end{pmatrix} \leq 0
\label{eq:proof1}
\end{align}
for all $k$ and all trajectories $(u,\xi)$ of the extended system~\eqref{eq:sys_stacked} (with $\xi_0 \in \mathcal{X}_\xi$). 
Hence, there exists a quadratically lower bounded storage function for the extended system~\eqref{eq:sys_stacked} satisfying the dissipation inequality. This implies that the system~\eqref{eq:sys_stacked} is $(Q,S,R)$-dissipative which in turn, using Thm.~\ref{thm:Hill} and Lem.~\ref{lem:extended}, implies that the system~\eqref{eq:sys} is $(Q,S,R)$-dissipative.

2) We prove this direction via contraposition. If system~\eqref{eq:sys} is $(Q,S,R)$-dissipative, then, according to Thm.~\ref{thm:diss_lmi}, there exists a quadratic storage function $V(x_k) = x_k^\top P^\prime x_k$ such that
\begin{align*}
x_{k+1}^\top P^\prime x_{k+1} - x_k^\top P^\prime x_k \leq s(u_k, y_k)
\end{align*}
holds for all $k$ and all $(u,x,y)$ satisfying~\eqref{eq:sys}. From the proof of Lem.~\ref{lem:extended}, we know that there exists a transformation matrix $T$ such that $x_{k} = T \xi_k$ holds for all reachable states $\xi_k$ and all $k$. Hence, the matrix $P = T^\top P^\prime T\succeq0$ 
satisfies \eqref{eq:proof1} for all $k$ and all $(u,\xi)$ of the extended system~\eqref{eq:sys_stacked}. 
Using the Fundamental Lemma~\cite{Willems05}, this implies that \eqref{eq:proof11} holds and thus there exists a $P\succeq0$ such that \eqref{eq:opt_output} holds.
\end{proof}

Thm.~\ref{thm:3} provides an equivalent formulation of dissipativity based on input-output data.
The result itself and its proof are conceptually similar to the state measurements case in Thm.~\ref{thm:1}.
A key challenge is that, in contrast to the matrix $\begin{pmatrix}X\\U\end{pmatrix}$, the matrix $\begin{pmatrix}\Xi\\U_\Xi\end{pmatrix}$ does usually not have full row rank, even if the input is persistently exciting, since the system~\eqref{eq:sys_stacked} is usually not controllable.
However, the Fundamental Lemma implies that, assuming the input to be persistently exciting of order $l+n$, the matrix $\Xi$ spans the space of all state trajectories of the extended system~\eqref{eq:sys_stacked}.
Under the stronger assumption of persistence of excitation of order $l+n+1$, which we assume in Thm.~\ref{thm:3}, it even holds that $\begin{pmatrix}\Xi\\U_\Xi\end{pmatrix}$ spans the space of all input-state trajectories of~\eqref{eq:sys_stacked}.
Using this fact, it is then straightforward to derive~\eqref{eq:opt_output}, which provides an equivalent data-driven characterization of dissipativity.

\begin{remark}
In Thm.~\ref{thm:3}, a necessary and sufficient condition for $(Q,S,R)$-dissipativity is given, which can be extended to cyclo-dissipativity for $P \nsucceq 0$. However, for $l > \underline{l}$ it is generally difficult to verify strict dissipativity (with $P \succ 0$ and a strict definiteness condition in \eqref{eq:opt_output}), since $\begin{pmatrix}\Xi\\U_\Xi\end{pmatrix}$ does usually not have full row rank in these cases. Finding conditions for strict dissipativity from input-output data with $l >\underline{l}$ is therefore an interesting issue for future research.
\end{remark}

\section{Dissipativity from noisy input-output trajectories}
\label{sec:output_noise}
In this section, we extend the results of Sec.~\ref{sec:output} to the case of noisy input-output data. From an input-output viewpoint, we consider the system~\eqref{eq:sys} in the difference operator form
\begin{align}
\begin{split}
 y_k = &-a_{l} y_{k-1} - \dots - a_2 y_{k-l+1} - a_1 y_{k-l} \\
&+ d u_{k} + b_{l} u_{k-1} + \dots + b_2 u_{k-l+1} + b_1 u_{k-l},
\end{split}
\label{eq:sys_diff}
\end{align}
with $a_i \in \mathbb{R}^{p \times p}$, $b_i \in \mathbb{R}^{p \times m}$, $i=1, \dots, l$, and $l$ is an upper bound on the lag $l \geq \underline{l}$ (compare Sec.~\ref{sec:output}, Def.~\ref{def:lag} \& Lem.~\ref{lem:extended}). 

Instead of having exact measurements of the output, we assume that the input-output behavior is 
corrupted by process noise of the form
\begin{align}
\begin{split}
 y_k = &-a_{l} y_{k-1} - \dots - a_2 y_{k-l+1} - a_1 y_{k-l} \\
&+ d u_{k} + b_{l} u_{k-1} + \dots + b_2 u_{k-l+1} + b_1 u_{k-l} + b_v v_k,
\end{split}
\label{eq:sys_diff_with_noise}
\end{align}
where $v_k \in \mathbb{R}^{m_v}$ denotes the noise and, as before,  the choice of $b_v \in \mathbb{R}^{p \times m_v}$ 
 can be used to include prior knowledge on the influence of the noise (cf.~Rem.~\ref{rem:bw}). The noisy input-output behavior in~\eqref{eq:sys_diff_with_noise}
 can also be represented in state-space via \eqref{eq:extended_sys_bw}.
\begin{figure*}
\vspace{2pt}
\begin{align}
\begin{split}
\label{eq:extended_sys_bw}
\begin{pmatrix} u_{k-l+1} \\ \vdots \\ u_{k-1} \\ u_{k} \\ y_{k-l+1} \\ \vdots \\ y_{k-1} \\ y_{k} \end{pmatrix} = \begin{pmatrix} 
0 & I & \dots & 0 & 0 & 0 & \dots & 0  \\
\vdots & \ddots & \ddots&\ddots& \vdots & & \ddots&\vdots \\
0 & 0 & \dots & I & 0& 0&  \dots & 0 \\
0 & 0 & \dots & 0 & 0 & 0&  \dots & 0 \\
0 & 0 & \dots & 0 & 0& I& \dots &0 \\
\vdots & \ddots & \ddots&\vdots&\vdots & & \ddots&\vdots \\
0 & 0 & \dots & 0 & 0& 0&  \dots & I \\
b_1 & b_2 & \dots & b_l & -a_1 & -a_2 & \dots &  -a_l 
\end{pmatrix}  
\begin{pmatrix} u_{k-l} \\ u_{k-l+1} \\ \vdots \\ u_{k-1} \\ y_{k-l} \\ y_{k-l+1} \\ \vdots \\ y_{k-1}  \end{pmatrix} + 
\begin{pmatrix} 
0 \\ \vdots \\ 0 \\ I \\ 0 \\ \vdots \\ 0 \\ D  \end{pmatrix} 
u_{k} + \begin{pmatrix} 0 \\ \vdots \\ 0 \\ 0 \\ 0 \\ \vdots \\ 0 \\  b_v \end{pmatrix} v_{k} 
\end{split} 
\end{align}
\end{figure*}
Since only the last block row in \eqref{eq:extended_sys_bw} is uncertain, 
we introduce the notation
\begin{align}
\begin{split}
\label{eq:extended_sys2}
\xi_{k+1} &= 
\begin{pmatrix} 
\widetilde{A}_1 \\
\widetilde{A}_2 \end{pmatrix} 
\xi_k + \begin{pmatrix} \widetilde{B}_1 \\ \widetilde{D} \end{pmatrix} u_{k} + \begin{pmatrix} 0 \\  b_v \end{pmatrix} v_{k}, \\
y_k &= \widetilde{A}_2 \xi_k + \widetilde{D} u_k + b_v v_{k},
\end{split} 
\end{align}
where $\tilde{A}_1 \in \mathbb{R}^{((p+m)l-p) \times (p+m)l}$ and $\widetilde{B}_1 \in \mathbb{R}^{((p+m)l-p) \times m}$ are known (cf.~\eqref{eq:extended_sys_bw}), and $\tilde{A}_2 \in \mathbb{R}^{p \times (p+m)l}$ and $\widetilde{D} \in \mathbb{R}^{p \times m}$ are unknown. 

With the input-output viewpoint~\eqref{eq:sys_diff} and the extended state system representation~\eqref{eq:extended_sys_bw}, we can follow a similar approach as in Sec.~\ref{sec:noise} to derive dissipativity conditions based on noisy input-output data.
Similar as in the input-state case, we denote the actual noise sequence which yields the available input-output trajectory $\{u_k\}_{k=0}^{N-1}$, $\{y_k\}_{k=0}^{N-1}$ by $\{\hat{v}_k\}_{k=0}^{N-1}$. While the exact noise instance is unknown, we assume that we have information on a bound on the stacked matrix
\begin{align}
\hat{V} = \begin{pmatrix} \hat{v}_{l} & \hat{v}_{l+1}  & \dots & \hat{v}_{N-1} \end{pmatrix}
\label{eq:vhat}
\end{align}
as specified in the following assumption.
\begin{assumption}
\label{as:V}
The matrix $\hat{V}$ in \eqref{eq:vhat} is an element of the set
\begin{align*}
\mathcal{V} = \{ V \in \mathbb{R}^{p \times (N-l)} | 
\begin{pmatrix} V^\top \\ I \end{pmatrix}^\top \begin{pmatrix} Q_v & S_v \\ S_v^\top & R_v \end{pmatrix} \begin{pmatrix} V^\top \\ I \end{pmatrix} \succeq 0 \},
\end{align*}
where $Q_v \in \mathbb{R}^{(N-l) \times (N-l)}$, $S_v \in \mathbb{R}^{(N-l) \times m_v}$ and $R_v \in \mathbb{R}^{m_v \times m_v}$ with $Q_v \prec 0$. 
\end{assumption}
Due to the presence of noise, there generally exist multiple matrix pairs $(\widetilde{A}_2, \widetilde{D})$ which are consistent with the data for some noise sequence $V \in \mathcal{V}$. We denote the set of all such matrix pairs consistent with the input-output data and the noise bound by 
\begin{align*}
\Sigma_{U,Y} = \{ (\widetilde{A}_{2,\text{d}}, \widetilde{D}_{\text{d}}) | Y_\Xi = \widetilde{A}_{2,\text{d}} \Xi + \widetilde{D}_{\text{d}} U_\Xi + b_v V, V \in \mathcal{V} \}.
\end{align*}
Along the lines of Sec.~\ref{sec:noise}, this leads to an equivalent formulation for the set $\Sigma_{U,Y}$.
\begin{lemma}
\label{lem:para2}
It holds that
\begin{align}
\Sigma_{U,Y} = \{ (\widetilde{A}_{2,\text{d}}, \widetilde{D}_{\text{d}}) | 
\begin{pmatrix} \widetilde{A}_{2,\text{d}}^\top \\ \widetilde{D}_{\text{d}}^\top \\ I \end{pmatrix}^\top \begin{pmatrix} \bar{Q}_v & \bar{S}_v  \\ \bar{S}_v^\top & \bar{R}_v \end{pmatrix} \begin{pmatrix} \widetilde{A}_{2,\text{d}}^\top \\ \widetilde{D}_{\text{d}}^\top \\ I \end{pmatrix} \succeq 0
\}
\label{eq:At2Dt}
\end{align}
with 
\begin{align*}
\bar{Q}_{v} &= \begin{pmatrix} \Xi \\ U_\Xi\end{pmatrix} Q_v \begin{pmatrix} \Xi \\ U_\Xi\end{pmatrix}^\top, \\
\bar{S}_{v} &= -\begin{pmatrix} \Xi \\ U_\Xi\end{pmatrix} \left( Q_v Y_\Xi^\top + S_v b_v^\top \right), \\
\bar{R}_{v} &= Y_\Xi Q_v Y_\Xi^\top + Y_\Xi S_v b_v^\top + b_v S_v^\top Y_\Xi^\top + b_v R_v b_v^\top.
\end{align*}
\end{lemma}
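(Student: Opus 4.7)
The plan is to exploit the fact that the defining relation for $\Sigma_{U,Y}$, namely $Y_\Xi = \widetilde{A}_{2,\text{d}}\Xi + \widetilde{D}_{\text{d}} U_\Xi + b_v V$ with $V \in \mathcal{V}$, is structurally identical to the input-state relation $X_+ = A_\text{d} X + B_\text{d} U + B_w W$ treated in Lemma~\ref{lem:para}. Under the correspondence $\widetilde{A}_{2,\text{d}} \leftrightarrow A_\text{d}$, $\widetilde{D}_{\text{d}} \leftrightarrow B_\text{d}$, $\Xi \leftrightarrow X$, $U_\Xi \leftrightarrow U$, $Y_\Xi \leftrightarrow X_+$, $b_v \leftrightarrow B_w$, and $(Q_v,S_v,R_v) \leftrightarrow (Q_w,S_w,R_w)$, the matrices $\bar{Q}_v,\bar{S}_v,\bar{R}_v$ reduce exactly to the data-dependent matrices $\bar{Q}_w,\bar{S}_w,\bar{R}_w$ of Lemma~\ref{lem:para}. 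Consequently, the most economical proof is to invoke \cite[Lem.~4 and Rem.~2]{Waarde2020b} applied to this analogous setting, mirroring the one-line proof of Lemma~\ref{lem:para}.

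For a self-contained verification of the direction $\Sigma_{U,Y}\subseteq\{\cdot\}$, I would proceed by direct computation. Setting $\Phi = \begin{pmatrix}\Xi \\ U_\Xi\end{pmatrix}$ and $Z = \begin{pmatrix}\widetilde{A}_{2,\text{d}} & \widetilde{D}_{\text{d}}\end{pmatrix}$, any element of $\Sigma_{U,Y}$ admits a witness $V \in \mathcal{V}$ with $Y_\Xi - Z\Phi = b_v V$. Expanding the left-hand side of \eqref{eq:At2Dt} with the stated $(\bar{Q}_v,\bar{S}_v,\bar{R}_v)$ and regrouping all terms around the factor $Y_\Xi - Z\Phi$, every cross-term involving $Y_\Xi$ and $\Phi$ cancels, yielding
\begin{equation*}
\begin{pmatrix}Z^\top \\ I\end{pmatrix}^\top \begin{pmatrix}\bar{Q}_v & \bar{S}_v \\ \bar{S}_v^\top & \bar{R}_v\end{pmatrix}\begin{pmatrix}Z^\top \\ I\end{pmatrix} = b_v\bigl(V Q_v V^\top + V S_v + S_v^\top V^\top + R_v\bigr)b_v^\top,
\end{equation*}
which is positive semidefinite by $V \in \mathcal{V}$.

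The reverse inclusion is the main obstacle. Given that the quadratic matrix inequality on the right-hand side of \eqref{eq:At2Dt} holds, one must exhibit some $V \in \mathcal{V}$ satisfying $b_v V = Y_\Xi - Z\Phi$. Because $b_v \in \mathbb{R}^{p \times m_v}$ need not have a left inverse, a naive pseudoinverse construction fails: the candidate noise is only defined on the range of $b_v$, and it must still be shown to lie in the quadratic set $\mathcal{V}$. The standard resolution, carried out in \cite{Waarde2020b}, exploits the structural assumption $Q_v \prec 0$: a Schur-complement reformulation of the above inequality, combined with a full-block S-procedure argument in the spirit of \cite{Scherer2001}, produces an admissible $V\in\mathcal{V}$ compatible with the required identity even in the rank-deficient case. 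Importing this argument verbatim closes the equality of the two sets.
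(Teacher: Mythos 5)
Your proposal is correct and follows essentially the same route as the paper, which likewise observes that $V^\top$ depends affinely on $(\widetilde{A}_{2,\text{d}}^\top,\widetilde{D}_{\text{d}}^\top,I)$ through the data and then invokes \cite[Lem.~4 and Rem.~2]{Waarde2020b}. Your added explicit completion-of-the-square for the forward inclusion and your delegation of the reverse inclusion (non-left-invertible $b_v$) to the cited reference match the paper's level of rigor.
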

\begin{proof}
With
\begin{align*}
\begin{pmatrix} V^\top \\ I \end{pmatrix} = \begin{pmatrix}- \Xi^\top & -U_\Xi^\top &  Y_\Xi^\top \\ 0 & 0 & I \end{pmatrix} \begin{pmatrix} \widetilde{A}_{2,\text{d}}^\top \\ \widetilde{D}_{\text{d}}^\top  \\ I \end{pmatrix},
\end{align*}
this lemma can be proven similar to the proof of \cite[Lem.~4 and Rem.~2]{Waarde2020b}.
\end{proof}

With the quadratic bound on the unknown matrices $( \widetilde{A}_{2, \text{d}} , \widetilde{D}_{\text{d}})$ from an input-output trajectory, we can again reformulate the uncertain system as a linear fractional transformation (LFT) of a nominal system 
\begin{align}
\begin{split}
\begin{pmatrix}
\xi_{k+1} \\ \tilde{z}_k
\end{pmatrix}
= 
\begin{pmatrix}
\begin{pmatrix}
\widetilde{A}_1 & \widetilde{B}_1 & 0 \\
0 & 0  & I 
\end{pmatrix}\\
\begin{pmatrix}
\phantom{_1}I\phantom{_1} & \phantom{_1}0\phantom{_1} & 0 \\
0 & I & 0 
\end{pmatrix}
\end{pmatrix}
\begin{pmatrix}
\begin{pmatrix} \xi_k \\ u_k \end{pmatrix} \\ \tilde{v}_k
\end{pmatrix} \\
\tilde{v}_k = \begin{pmatrix} \widetilde{A}_{2, \text{d}} & \widetilde{D}_{\text{d}} \end{pmatrix} \tilde{z}_k,
\end{split}
\label{eq:lft2}
\end{align}
with $( \widetilde{A}_{2, \text{d}} , \widetilde{D}_{\text{d}}) \in \Sigma_{U,Y}$. This allows us to again apply robust analysis results to guarantee dissipativity properties from noisy input-output trajectories. For this, 
we assume again that the inverse \eqref{eq:inv3} exists.
\begin{theorem}\label{thm:noise_io}
Let $\tilde{R} \preceq 0$. If there exists a matrix $P =P^\top \succ 0$, $\tau > 0$ such that \eqref{eq:rob_io_dual} holds,
then~\eqref{eq:sys_diff} is $(Q,S,R)$-dissipative for all matrices consistent with the data $( \widetilde{A}_{2, \text{d}} , \widetilde{D}_{\text{d}}) \in \Sigma_{U,Y}$.
\end{theorem}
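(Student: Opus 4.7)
My plan mirrors the proof of Theorem~\ref{thm:noise}, adapted to the input-output setting in which the uncertainty is collected in the pair $(\widetilde{A}_{2,d},\widetilde{D}_{d})$. This pair plays a double role in the extended realization~\eqref{eq:extended_sys2}: it forms the last block row of $(\widetilde{A},\widetilde{B})$ and, since $y_k$ coincides with the last $p$ components of $\xi_{k+1}$, it simultaneously equals $(\widetilde{C},\widetilde{D})$. So from the state-space viewpoint, every system consistent with the data has the form $\widetilde{A}=(\widetilde{A}_1^\top,\widetilde{A}_{2,d}^\top)^\top$, $\widetilde{B}=(\widetilde{B}_1^\top,\widetilde{D}_{d}^\top)^\top$, $\widetilde{C}=\widetilde{A}_{2,d}$, $\widetilde{D}=\widetilde{D}_{d}$ for some $(\widetilde{A}_{2,d},\widetilde{D}_{d})\in\Sigma_{U,Y}$.

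First, I would combine~\eqref{eq:rob_io_dual} with Lemma~\ref{lem:para2} via the full-block S-procedure~\cite{Scherer2001} and the multiplier $\tau>0$. The outer factor in~\eqref{eq:rob_io_dual} is designed so that, after the quadratic data constraint has been absorbed, the residual matrix inequality is exactly the dualized dissipativity LMI from Theorem~\ref{thm:diss_lmi} for the extended system above, and must hold for every $(\widetilde{A}_{2,d},\widetilde{D}_{d})\in\Sigma_{U,Y}$.

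Next I would invoke the dualization lemma~\cite{Scherer2000}. The sign hypothesis $\tilde R\preceq 0$ (instead of $\tilde R\succeq 0$ as in Theorem~\ref{thm:noise}) supplies the inertia that the dualization lemma requires on the $\Pi$-block; the flip relative to the input-state case reflects that the uncertainty has migrated from the ``state update'' side to the ``output'' side of the LMI in Theorem~\ref{thm:diss_lmi}. The output of the dualization step is the primal dissipativity LMI of the extended system with storage matrix $P^{-1}\succ 0$, holding uniformly for every $(\widetilde{A}_{2,d},\widetilde{D}_{d})\in\Sigma_{U,Y}$. By Theorem~\ref{thm:diss_lmi} each data-consistent extended system~\eqref{eq:extended_sys2} is thus $(Q,S,R)$-dissipative with the lower bounded quadratic storage $V(\xi)=\xi^\top P^{-1}\xi$. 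Applying Theorem~\ref{thm:Hill} to each such extended system yields the infinite-horizon inequality~\eqref{eq:diss}; Lemma~\ref{lem:extended} identifies these input-output trajectories with those of the corresponding~\eqref{eq:sys_diff}; and one further application of Theorem~\ref{thm:Hill} transfers $(Q,S,R)$-dissipativity back to every~\eqref{eq:sys_diff} consistent with the data, which is the claim.

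The main obstacle, and the only step that is genuinely different from Theorem~\ref{thm:noise}, is the bookkeeping in the S-procedure: the outer factor in~\eqref{eq:rob_io_dual} must encode the coupling that the same pair $(\widetilde{A}_{2,d},\widetilde{D}_{d})$ appears both as $(\widetilde{C},\widetilde{D})$ and in the bottom block-row of $(\widetilde{A},\widetilde{B})$, so that the S-procedure residual matches the dissipativity LMI of the full extended system rather than an over- or under-parameterized variant. Once this structural alignment is carried out, the dualization step and the input-output transfer to~\eqref{eq:sys_diff} are essentially routine.
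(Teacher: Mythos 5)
Your proposal is correct and follows essentially the same route as the paper, which itself only states that the proof ``follows along the arguments of the proof of Thm.~\ref{thm:noise}'' using Lem.~\ref{lem:para2}: full-block S-procedure to absorb the data-consistency constraint, dualization to recover the primal dissipativity LMI for the extended system, and then Thm.~\ref{thm:diss_lmi}, Thm.~\ref{thm:Hill} and Lem.~\ref{lem:extended} to transfer the property back to~\eqref{eq:sys_diff}. Your observation that $(\widetilde{A}_{2,\text{d}},\widetilde{D}_{\text{d}})$ simultaneously plays the role of the last block row of $(\widetilde{A},\widetilde{B})$ and of $(\widetilde{C},\widetilde{D})$ is exactly the structural point that the outer factor of~\eqref{eq:rob_io_dual} encodes.
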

\begin{proof}
With the result from Lem.~\ref{lem:para2}, the proof follows along the arguments of the proof of Thm.~\ref{thm:noise}.
\end{proof}

\begin{figure*}
\vspace{2pt}
\begin{align}\label{eq:rob_io_dual}
\setstretch{1.25}
\mleft(
\begin{array}{ccc}
\begin{pmatrix}\widetilde{A}_1^\top & 0\end{pmatrix} & 0 & \begin{pmatrix}I & 0\end{pmatrix} \\ 
-I & 0 & 0 \\
\begin{pmatrix} \widetilde{B}_1^\top & 0 \end{pmatrix} & 0 & \begin{pmatrix} 0 & I \end{pmatrix} \\ 
0 & -I & 0 \\
0 & 0 & I \\ 
\begin{pmatrix} 0&I \end{pmatrix} & I & 0 \\
\end{array}
\mright)^\top
\mleft(
\begin{array}{cc|cc|cc}
-P&0&0&0&0&0\\
0&P&0&0&0&0\\\hline
0&0&-\tilde{R}&-\tilde{S}^{\top}&0&0\\
0&0&-\tilde{S}&-\tilde{Q}&0&0\\\hline
0&0&0&0& -\tau \bar{Q}_{v}&-\tau \bar{S}_{v} \\
0&0&0&0&-\tau \bar{S}_{v}^\top & -\tau \bar{R}_{v}
\end{array}\mright)
\mleft(
\begin{array}{ccc}
\begin{pmatrix}\widetilde{A}_1^\top & 0\end{pmatrix} & 0 & \begin{pmatrix}I & 0\end{pmatrix} \\ 
-I & 0 & 0 \\
\begin{pmatrix} \widetilde{B}_1^\top & 0 \end{pmatrix} & 0 & \begin{pmatrix} 0 & I \end{pmatrix} \\ 
0 & -I & 0 \\
0 & 0 & I \\ 
\begin{pmatrix} 0&I \end{pmatrix} & I & 0 \\
\end{array}
\mright)
\succ0
\end{align}
\noindent\makebox[\linewidth]{\rule{\textwidth}{0.4pt}}
\end{figure*}

\begin{remark}
Note that from the proof of Lem.~\ref{lem:extended} (and the extended system as described in~\eqref{eq:extended_sys}), one can see that the matrices $a_i$, $b_i$, $i=1, \dots, l$ \eqref{eq:sys_diff} are uniquely defined if the left inverse of $\mathcal{O}_l$ is unique and hence if
$l = \underline{l}$ and $p \underline{l} = n$. 
Empirical evaluations showed that in these cases, the presented approach based on the condition in \eqref{eq:rob_io_dual} worked very well in numerical examples, while for overapproximations of the lag no reasonable upper bounds on the respective dissipativity properties could be found. Improving the approach in these cases is part of future work.
\end{remark}
\begin{example}
\label{ex:2}
We illustrate the introduced approach with a numerical example. We consider a randomly generated system of order $n=4$ with two inputs and outputs $m=p=2$. The system has an operator gain of $\gamma_{\text{true}} = 3.73$. We assume knowledge of the lag $\underline{l}=2$ and we simulate the trajectory with $u_k$, $k=0,\dots,N$ uniformly sampled in $[-1,1]$, for different lengths $N$. We sample the noise $\hat{v}_k$ uniformly from the ball $\|\hat{v}_k\|_2 \leq \bar{v}$ for all $k=0,\dots,N$ with $\bar{v} = 0.01$, which implies a bound on the measurement noise of $\hat{V}^\top \hat{V} \preceq \bar{v}^2 (N-\underline{l}) I$ (cf. Asm.~\ref{as:V}). We then apply the result of Thm.~\ref{thm:noise_io} and solve an SDP for finding the minimal $\gamma$ such that \eqref{eq:rob_io_dual} holds with $\Pi_\gamma$ in \eqref{eq:pi}. The resulting upper bound on the operator gain, which is guaranteed for all systems that are consistent with the data, is depicted in Fig.~\ref{fig:ex2} for different data lengths.

\begin{figure}[ht]
\definecolor{mycolor1}{rgb}{0.00000,0.44700,0.74100}%
\begin{tikzpicture}

\begin{axis}[%
width=0.35\textwidth,
height=0.2\textwidth,
at={(0cm,0cm)},
scale only axis,
xmin=15,
xmax=40,
xlabel={$N$},
ymin=3,
ymax=5,
ylabel={$\hat{\gamma}$},
axis background/.style={fill=white},
ymajorgrids,
]
\addplot [only marks,color=mycolor1,mark size=1.5pt,mark=*,mark options={solid},forget plot]
  table[row sep=crcr]{%
18	4.662761283325202\\
19   3.766959984817681\\
20   3.709745733237886\\
21   3.728304895613251\\
22   3.744160696200605\\
23   3.748417531335434\\
24   3.767320685494549\\
25   3.788802559149664\\
26   3.703386490519571\\
27   3.705129877124437\\
28   3.717208340067915\\
29   3.730358885578753\\
30   3.742347173353014\\
31   3.756354382971415\\
32   3.717138203825190\\
33   3.700445778056766\\
34   3.687731079197144\\
35   3.691157735627397\\
36   3.697965960903294\\
37   3.684441708723422\\
38   3.667987726870205\\
39   3.675752810886129\\
40   3.682385695555215\\
41   3.680937883116116\\
42   3.686669016093030\\
43   3.691162745359019\\
44   3.696543197122311\\
45   3.697184442770078\\
46   3.680662529178669\\
47   3.680329048911235\\
48   3.678377910256674\\
49   3.663858195624788\\
50   3.669080963117746\\
51   3.672421339356717\\
52   3.676945127012445\\
53   3.676714679357779\\
54   3.682445812334693\\
55   3.688036672826158\\
56   3.686278257026423\\
57   3.690882200388122\\
58   3.694584392057649\\
59   3.697807755102214\\
60   3.702930604941861\\
};
\addplot [only marks,color=red,mark size=4pt,mark=x,mark options={solid},forget plot]
  table[row sep=crcr]{%
15 3\\
16 3\\
17 3\\
};

\addplot [color=orange,solid,forget plot,line width=1pt]
  table[row sep=crcr]{%
0	3.361973981762519\\
60 3.361973981762519\\
};
\end{axis}
\end{tikzpicture}%
\caption{Guaranteed upper bound on the operator gain from noisy input-output trajectories (\textcolor{mycolor1}{$\bullet$}) 
of different length $N$ for Ex.~\ref{ex:2}. The red crosses (\textcolor{red}{$\times$}) indicate that no upper bound could be found.
}
\label{fig:ex2}
\end{figure}
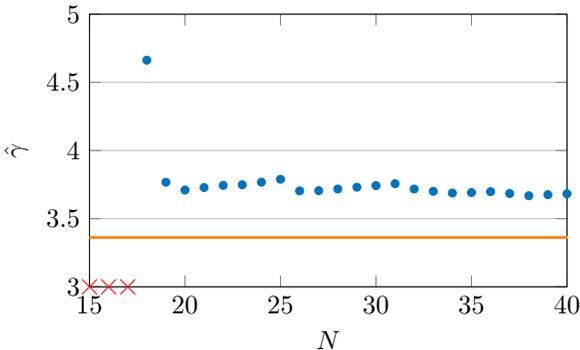

Next, we take the same example and increase the noise level $\bar{v}$ from $0.001$ to $0.022$ for $N=50$ data points each and again apply the result from Thm.~\ref{thm:noise_io}. The resulting upper bounds on the operator gain are depicted in Fig.~\ref{fig:ex2_noise}. 
\begin{figure}
\begin{tikzpicture}
\begin{axis}[%
width=0.35\textwidth,
height=0.2\textwidth,
at={(0cm,0cm)},
scale only axis,
xmin=0,
xmax=0.022,
xtick scale label code/.code={$\bar{v}$ $(10^{#1}$) \quad \quad \quad \quad \quad \quad \quad \quad \quad \quad \quad \quad},
ymin=3,
ymax=8,
ylabel={$\hat{\gamma}$},
axis background/.style={fill=white},
ymajorgrids,
]
\addplot [only marks,color=mycolor1,mark size=1.5pt,mark=*,mark options={solid},forget plot]
  table[row sep=crcr]{%
0.001 3.3808060172651\\
0.002 3.4001923102446\\
0.003 3.4218283138262\\
0.004 3.4455705634803\\
0.005 3.4718316851647\\
0.006 3.5011486346236\\
0.007 3.5342028438733\\
0.008 3.5719010743377\\
0.009 3.6154657005328\\
0.01 3.6665749825525\\
0.011 3.7276335915757\\
0.012 3.8026679150828\\
0.013 3.8956449610222\\
0.014 4.0168604273766\\
0.015 4.1811892032281\\
0.016 4.4176917235354\\
0.017 4.7896065205199\\
0.018 5.4634445600252\\
0.019 7.0650189573566\\
};
\addplot [only marks,color=red,mark size=4pt,mark=x,mark options={solid},forget plot]
  table[row sep=crcr]{%
0.02 3\\
0.021 3\\
0.022 3\\
};
\addplot [color=orange,solid,forget plot,line width=1pt]
  table[row sep=crcr]{%
0 3.361973981762519\\
0.025 3.361973981762519\\
};
\end{axis}
\end{tikzpicture}%
\caption{Guaranteed upper bounds on the operator gain of the system in Ex.~\ref{ex:2} from noisy input-output trajectories for increasing noise levels $\bar{v}$ and $N=50$ data points. }
\label{fig:ex2_noise}
\end{figure}
\end{example}

\begin{remark}
Most existing data-based dissipativity analysis approaches only consider dissipativity over the finite time horizon, and more importantly, they cannot provide quantitative guarantees in the case of noisy data. To be more specific, the existing one-shot methods exploiting the Fundamental Lemma (e.g.~\cite{Maupong2017,Romer2019a,Koch2020}) do not provide guarantees under the presence of noise, and the existing iterative methods (e.g.~\cite{Wahlberg2010,Rojas2012,Tanemura2019,Romer2019c,Mueller2017}) at most provide asymptotic guarantees in the case of noisy data. On the other hand, the results presented in this section lead to simple LMI-based conditions which can guarantee dissipativity of an unknown system based on noisy input-output data of finite length. It is part of future research to extend the introduced approach to find a tight description of the system properties given conservative upper bounds on the lag.
\end{remark}

\section{Experimental application example}
In the following, we apply the presented results to an experimental setup to show the potential and applicability of the introduced ideas in real-world applications. More specifically, 
we apply the result from Sec.~\ref{sec:noise} to determine bounds on the operator gain as well as passivity properties of a two-tank system locally around a steady state, and we compare the results to system identification approaches.

The experimental setup of the two-tank water system can be seen in Fig.~\ref{fig:twotank}. It consists of two identical water tanks. The first water tank is fed by a water pump and the second water tank is fed by an outlet of the first water tank and has a water outlet itself. The considered input is the voltage $u_v$ which directly influences the throughput of the pump $v$. The heights of the two tanks $h_1$ and $h_2$ are considered our outputs, which can be measured. 

By first principles, the two-tank can be modeled by 
\begin{align*}
\dot{h_1} &= - \frac{O_1}{A_1} \mu_1 \sqrt{2gh_1} + \frac{1}{A_1} k_\text{p} u_v \\
\dot{h_2} &= - \frac{O_2}{A_2} \mu_2 \sqrt{2gh_2}  + \frac{O_1}{A_2} \mu_1 \sqrt{2gh_1} 
\end{align*}
where $A_i$, $O_i$ are the cross section and the outlet of tank $i=1,2$, respectively, $k_\text{p}$ is a constant of the pump including the tube and outlet and $\mu_i$ captures approximately the hydro-dynamic resistance of the outlet of tank $i$, $i=1,2$.
 
Linearizing the system around a stationary point $(u_v^0, h_1^0, h_2^0)$ yields
\begin{align}
\begin{split}
\dot{x} &= \begin{pmatrix} 
-\frac{\mu_1 O_1 \sqrt{2g}}{2 A_1 \sqrt{h_1^0}} & 0 \\
\frac{\mu_1 O_1 \sqrt{2g}}{2 A_2 \sqrt{h_1^0}} & - \frac{\mu_2 O_2 \sqrt{2g}}{2 A_2 \sqrt{h_2^0}}
 \end{pmatrix} x + \begin{pmatrix} \frac{k_\text{p}}{A_1} \\ 0 \end{pmatrix} u, \\
y &= \begin{pmatrix} 1 &0 \\ 0 &1 \end{pmatrix} x,
\end{split}
\label{eq:lin}
\end{align}
with $u = u_v - u_v^0$, $x_1 = h_1 - h_1^0$ and $x_2 = h_2 - h_2^0$.
\begin{figure}[t]
\centering
\includegraphics[width=0.22\textwidth]{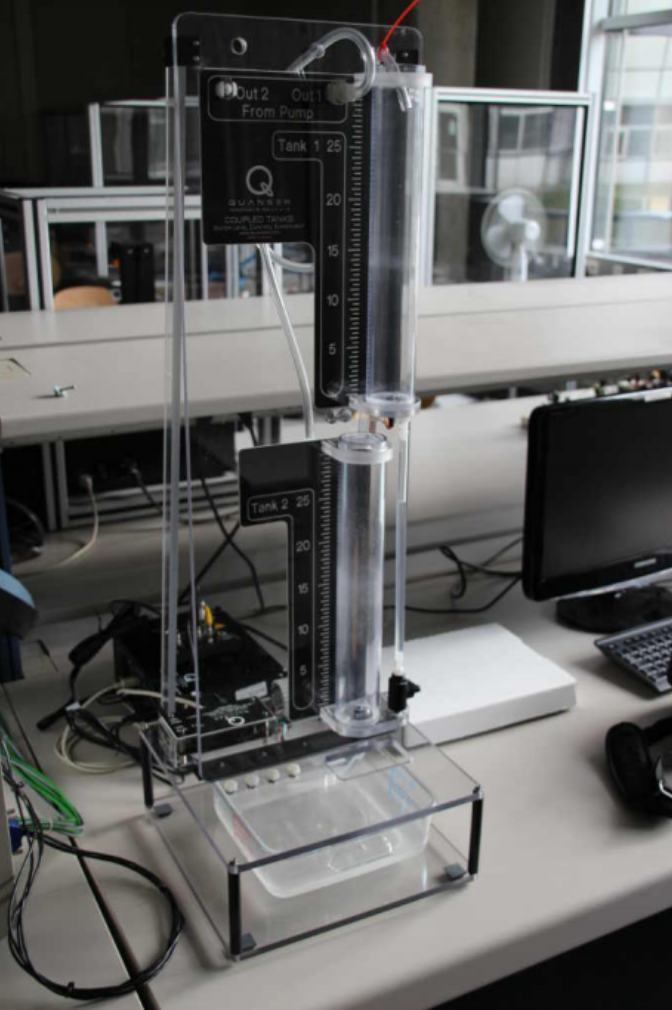}
\includegraphics[width=0.23\textwidth]{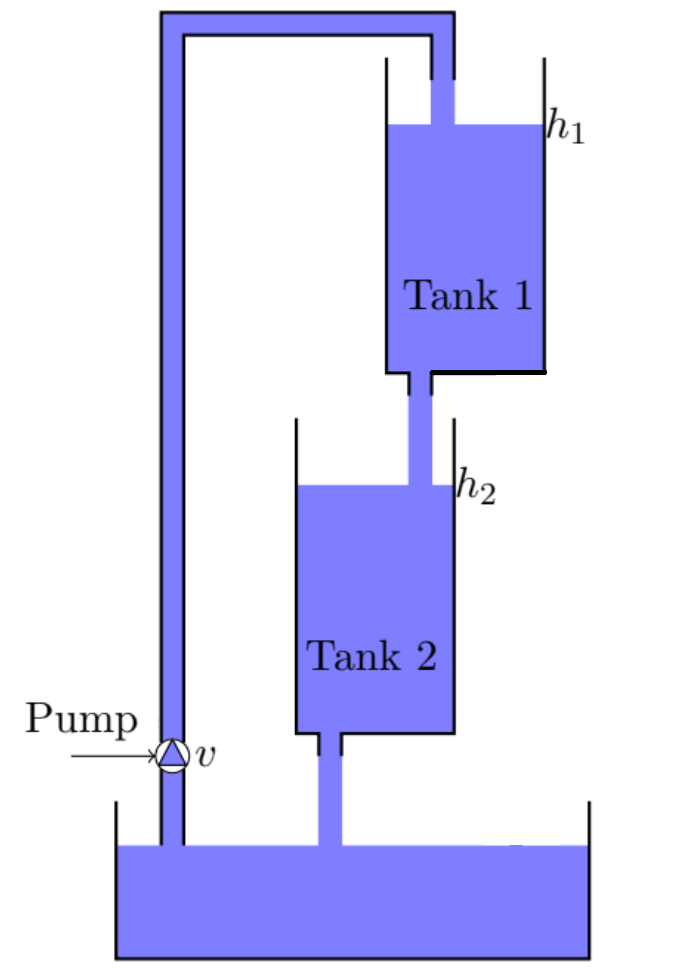}
\caption{Experimental setup and schematic of the two-tank system.}
\label{fig:twotank}
\end{figure}
We are now interested in local information about the system around a setpoint. The determine a setpoint, we apply a simple controller in a first experiment that stabilizes the systems at $h_2^0$. From this first experiment, we approximately determine the steady state $(u_v^0, h_1^0, h_2^0) = (6.8, 13.8, 16.4)$. 
In a second experiment, we excite our system around this steady state. 
The resulting input signal $u_v$ can be seen in Fig.~\ref{fig:data_wt}.
We measure the resulting two heights $h_1$ and $h_2$ over 18 seconds with a sampling time of $T_\text{s}= 0.4$ seconds. 
The measured heights $h_1$ and $h_2$ are also shown in Fig.~\ref{fig:data_wt}.
\begin{figure}
\definecolor{mycolor1}{rgb}{0.00000,0.44700,0.74100}%
\definecolor{mycolor2}{rgb}{0.85000,0.32500,0.09800}%
\definecolor{mycolor3}{rgb}{0.92900,0.69400,0.12500}%
\begin{tikzpicture}
\pgfplotsset{set layers}
\begin{axis}[%
width=0.3\textwidth,
height=0.2\textwidth,
at={(0cm,0cm)},
scale only axis,
axis y line*=left,
xmin=0,
xmax=18,
xlabel={time [s]},
ymin=0,
ymax=25,
ylabel={water height [cm]},
ylabel style = {align=center},
ylabel={Voltage $u_v$ [V] \ref{pgfplots:plot1}},
axis background/.style={fill=white},
ymajorgrids,
]
\addplot [color=mycolor1]
  table[row sep=crcr]{%
0	13.2655764670235\\
0.400000000000006	12.6632892916176\\
0.800000000000011	12.1253633316554\\
1.20000000000002	11.65267931736\\
1.59999999999999	11.2279890692893\\
2	10.820832363288\\
2.40000000000001	10.4603786745044\\
2.80000000000001	10.1199511078344\\
3.20000000000002	9.81534892705464\\
3.59999999999999	9.51840659522308\\
4	9.28442188496268\\
4.40000000000001	9.07564867139603\\
4.80000000000001	8.86206497777304\\
5.20000000000002	8.67056696267767\\
5.59999999999999	8.51860947123724\\
6	8.35728568242636\\
6.40000000000001	8.22688144987772\\
6.80000000000001	8.12935129544252\\
7.20000000000002	8.01773602251826\\
7.59999999999999	7.92346982644101\\
8	7.82683771160467\\
8.40000000000001	7.76276552322453\\
8.80000000000001	7.71668179170251\\
9.20000000000002	7.65019803470887\\
9.59999999999999	7.60549317238448\\
10	7.55092205891082\\
10.4	7.52404391132975\\
10.8	7.49531338610591\\
11.2	7.45753127189237\\
11.6	7.44341031386354\\
12	7.41832504744813\\
12.4	7.39429820379769\\
12.8	7.38651248992973\\
13.2	7.37593309249175\\
13.6	7.36449671540713\\
14	7.36004186939674\\
14.4	7.36015879556359\\
14.8	7.34244264711988\\
15.2	7.34286961582916\\
15.6	7.33241773072604\\
16	7.34159004766602\\
16.4	7.33194529932428\\
16.8	7.33033120752742\\
17.2	7.34114177617879\\
17.6	7.33357681742052\\
18	7.32865703080828\\
};
\label{pgfplots:plot1}
\end{axis}

\begin{axis}[
width=0.3\textwidth,
height=0.2\textwidth,
at={(0cm,0cm)},
scale only axis,
scale only axis,
xmin=0,
xmax=18,
ymin=0,
ymax=25,
axis y line*=right,
axis x line=none,
ylabel style = {align=center},
ylabel={tank height $h_1$ [cm] \ref{pgfplots:plot2} \\ tank height $h_2$ [cm] \ref{pgfplots:plot3}},
]
\addplot [color=mycolor2, thick, dashed]
  table[row sep=crcr]{%
0	15.0687656027159\\
0.400000000000006	15.6042326702713\\
0.800000000000011	16.0741580742456\\
1.20000000000002	16.4720355622911\\
1.59999999999999	16.8114491831535\\
2	17.1347180413618\\
2.40000000000001	17.4101270177453\\
2.80000000000001	17.6525290693512\\
3.20000000000002	17.8568101307613\\
3.59999999999999	18.0454046719257\\
4	18.1818261482487\\
4.40000000000001	18.2964664736786\\
4.80000000000001	18.4106600029467\\
5.20000000000002	18.5003993351067\\
5.59999999999999	18.5676842676238\\
6	18.6265287997206\\
6.40000000000001	18.6657163564346\\
6.80000000000001	18.6719350235492\\
7.20000000000002	18.6806710179875\\
7.59999999999999	18.7014902389883\\
8	18.6990614345562\\
8.40000000000001	18.6755281810567\\
8.80000000000001	18.6414341570509\\
9.20000000000002	18.6111222228511\\
9.59999999999999	18.581916083153\\
10	18.5624240963984\\
10.4	18.5151156831204\\
10.8	18.4678343328671\\
11.2	18.4317177458192\\
11.6	18.3816407992314\\
12	18.341866782604\\
12.4	18.2961535549763\\
12.8	18.2514788565656\\
13.2	18.2020142357686\\
13.6	18.1544092709048\\
14	18.108525256728\\
14.4	18.060780002268\\
14.8	18.0171954821637\\
15.2	17.9696760171956\\
15.6	17.9233967165622\\
16	17.8770856101762\\
16.4	17.8336771515672\\
16.8	17.7909503419435\\
17.2	17.7474306563317\\
17.6	17.7062214583277\\
18	17.6663447660066\\
};
\label{pgfplots:plot2}

\addplot [color=mycolor3, thick, dash dot]
  table[row sep=crcr]{%
0	15.9003385313932\\
0.400000000000006	15.9392671917399\\
0.800000000000011	15.9816401808596\\
1.20000000000002	16.0326308957868\\
1.59999999999999	16.0949539521915\\
2	16.1566522688505\\
2.40000000000001	16.2211243485861\\
2.80000000000001	16.2981953094937\\
3.20000000000002	16.3786809008706\\
3.59999999999999	16.4667834017441\\
4	16.5473441193566\\
4.40000000000001	16.6256965570803\\
4.80000000000001	16.7086766849606\\
5.20000000000002	16.7946330856594\\
5.59999999999999	16.8664281784025\\
6	16.9541521239027\\
6.40000000000001	17.0327198975406\\
6.80000000000001	17.1125825781376\\
7.20000000000002	17.2024990913953\\
7.59999999999999	17.2661557973211\\
8	17.3531335001608\\
8.40000000000001	17.4308390382821\\
8.80000000000001	17.5024177479873\\
9.20000000000002	17.5884333938838\\
9.59999999999999	17.6543357507591\\
10	17.7200144800915\\
10.4	17.786823870399\\
10.8	17.8552332882046\\
11.2	17.9213788439176\\
11.6	17.9795287271892\\
12	18.037880043365\\
12.4	18.1007306935185\\
12.8	18.148386817817\\
13.2	18.2028707336554\\
13.6	18.2564068306478\\
14	18.3022465440513\\
14.4	18.3457772008555\\
14.8	18.4011472328359\\
15.2	18.4441996012778\\
15.6	18.4956606874894\\
16	18.5299380362255\\
16.4	18.5780675736528\\
16.8	18.6185303095776\\
17.2	18.6488197021031\\
17.6	18.6931151617168\\
18	18.733872740508\\
};
\label{pgfplots:plot3}
\end{axis}
\end{tikzpicture}%
\caption{Input $u_v$ and the measured heights $h_1$ and $h_2$ of the two tank system over a horizon of 18 seconds and a sampling time of $T_\text{s} = 0.4$ seconds.}
\label{fig:data_wt}
\end{figure}
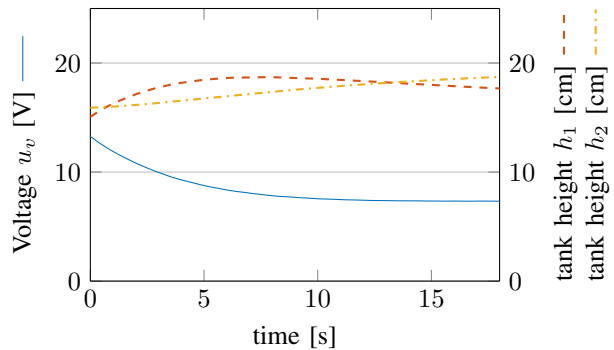

As $g$ is a given physical constant and the parameters $A_i$ and $O_i$ are geometrical specifications of the experimental setup, we estimate the remaining parameters $k_\text{p}$, $\mu_1$ and $\mu_2$ in~\eqref{eq:lin} via a least-squares approach (\textsc{Matlab} function \textit{lsqcurvefit}). The operator norm of the resulting system is given by $\gamma_{\text{ParamEst}} = 6.85$. 

As an alternative, we compute $\begin{pmatrix} A & B \end{pmatrix}$ via
\begin{align}
\begin{pmatrix} A & B \end{pmatrix} = X_+ \begin{pmatrix} X \\ U \end{pmatrix}^\dagger
\label{eq:ls}
\end{align}
where $^\dagger$ denotes the right inverse. The result can be interpreted as the result to the least-squares problem
\begin{align*}
\min_{\begin{pmatrix} A & B \end{pmatrix}} \| X_+ - \begin{pmatrix} A & B \end{pmatrix} \begin{pmatrix} X \\ U \end{pmatrix} \|_F,
\end{align*}
where $F$ indicates the Frobenius norm. 
The resulting operator gain of the identified system is given by $\gamma_{\text{LS}} = 5.20$, underestimating the operator gain $\gamma_{\text{ParamEst}}$ by more than $20 \%$.

Secondly, we are interested in the passivity properties of the system. More specifically, we want to determine the input feedforward passivity index $\rho$ (cf.~\eqref{eq:pi}) for the input-output pair $u_v$ and $h_2$.
The estimated input feedforward passivity index by consecutive system identification and analysis are collected in Tab.~\ref{tab:SysID}, together with the summary of operator gain estimates.
\begin{table}[ht]
 \caption{Consecutive model identification \& analysis}
 \label{tab:SysID}
 \begin{center}
{\renewcommand{\arraystretch}{1.6}%
 \begin{tabular}{|cc|}
 \hline
\pbox{4cm}{\quad ParamEst \quad }  & \pbox{4cm}{\quad LS (Eq.~\eqref{eq:ls}) \quad } \\[1ex]
 \hline
\hline
\pbox{2.5cm}{$\hat{\gamma} = 6.85$} 
& \pbox{2.5cm}{$\hat{\gamma} = 5.20$}  \\[1ex]
 \hline
\pbox{2.5cm}{$\hat{\rho} = -0.515$} 
& \pbox{2.5cm}{$\hat{\rho} = -0.588$} \\[1ex]
 \hline
 \end{tabular}}
 \end{center}
 \end{table} 

Next, we apply the result in Sec.~\ref{sec:noise} to find a guaranteed operator gain as well as a guaranteed input feedforward passivity index for all systems consistent with the data
from noisy state measurements. For this, we assume that process noise enters the linearized model \eqref{eq:lin} as described in \eqref{eq:sys_noise}. We apply the results of Thm.~\ref{thm:noise} by assuming a bound on the process noise given by $\| w_k \|_2 \leq \bar{w} $, which implies $\hat{W} \hat{W}^\top \preceq \bar{w}^2 N I$. The result is plotted in Fig.~\ref{fig:wt} and Fig.~\ref{fig:wt2} for different assumed noise levels $\bar{w}$. 
In fact, we retrieve provably robust, reasonable upper bounds on the operator gain of the system, as well as lower bounds on the input feedforward passivity index. 

\begin{figure}[h]
\definecolor{mycolor1}{rgb}{0.00000,0.44700,0.74100}%
\definecolor{dgreen}{RGB}{78,117,102}%
\begin{tikzpicture}
\begin{axis}[%
width=0.37\textwidth,
height=0.25\textwidth,
at={(0cm,0cm)},
scale only axis,
xmin=0.00725,
xmax=0.012,
xtick scale label code/.code={$\bar{w}$ $(10^{#1}$) \quad \quad \quad \quad \quad \quad \quad \quad \quad \quad \quad \quad},
ymin=0,
ymax=38,
ylabel={$\hat{\gamma}$},
axis background/.style={fill=white},
ymajorgrids,
legend style={at={(axis cs:0.0119,18)}},
legend cell align={left},
]
\addplot [color=mycolor1,mark size=1pt,mark=*,mark options={solid}]
  table[row sep=crcr]{%
0.008     7.9\\
0.00825   8.6\\
0.0085    9.3\\
0.00875   10.2\\
0.009     11.3\\
0.00925   12.6\\
0.0095    14.2\\
0.00975   16.3\\
0.01      19.1\\
0.01025   23.3\\
0.0105    29.8\\
0.01075   41.7\\
0.011     70.0\\
0.01125   220\\
};
\addlegendentry{\scriptsize Rob.~DD-Ana};
\addplot [color=violet,solid,line width=1pt]
  table[row sep=crcr]{
0	6.85\\
0.012 6.85\\
};
\addlegendentry{\scriptsize ParamEst};
\addplot [color=dgreen,solid,line width=1pt, dash dot]
  table[row sep=crcr]{
0	5.20\\
0.012 5.20\\
};
\addlegendentry{\scriptsize LS \tiny{(Eq.~\eqref{eq:ls})}};

\addplot [only marks,color=red,mark size=4pt,mark=x,mark options={solid},forget plot]
  table[row sep=crcr]{%
	0.00725 0\\
	0.0075 0\\
	0.00775 0\\
	0.0115 0\\
	0.01175 0\\
	0.012 0\\
};
\end{axis}
\end{tikzpicture}%
\caption{Upper bound on the operator gain of the two-tank system from noisy input-state trajectories for different bounds on the measurement noise $\bar{w}$. 
The blue dots \textcolor{mycolor1}{$\bullet$} are the computed upper bounds by the introduced robust data-driven analysis, the red crosses \textcolor{purple}{$\times$} indicate that no upper bound could be found and the differently colored lines indicate the results of consecutive system identification and systems analysis as indicated by the legend.
}
\label{fig:wt}
\end{figure}

\begin{figure}[h]
\definecolor{mycolor1}{rgb}{0.00000,0.44700,0.74100}
\definecolor{rosa}{RGB}{234,137,154}
\definecolor{dgreen}{RGB}{78,117,102}%
\begin{tikzpicture}
\begin{axis}[%
width=0.37\textwidth,
height=0.25\textwidth,
at={(0cm,0cm)},
scale only axis,
xmin=0.00725,
xmax=0.012,
xtick scale label code/.code={$\bar{w}$ $(10^{#1}$) \quad \quad \quad \quad \quad \quad \quad \quad \quad \quad \quad \quad \quad},
ymin=-2.9,
ymax=0.2,
ylabel={$\hat{\rho}$},
axis background/.style={fill=white},
ymajorgrids,
legend style={at={(axis cs:0.0119,-0.7)}},
legend cell align={left},
]
\addplot [color=violet,solid,line width=1pt]
  table[row sep=crcr]{
0	-0.51\\
0.012 -0.51\\
};
\addlegendentry{\scriptsize ParamEst};
\addplot [color=dgreen,solid,line width=1pt, dash dot]
  table[row sep=crcr]{
0	-0.59\\
0.012 -0.59\\
};
\addlegendentry{\scriptsize LS \tiny{(Eq.~\eqref{eq:ls})}};
\addplot [color=mycolor1,mark size=1pt,mark=*,mark options={solid}]
  table[row sep=crcr]{%
0.008   -0.9903\\
0.00825   -1.0528\\
0.0085   -1.1191\\
0.00875   -1.1946\\
0.009   -1.2800\\
0.00925   -1.3778\\
0.0095   -1.4927\\
0.00975   -1.6299\\
0.01   -1.7982\\
0.01025   -2.0116\\
0.0105   -2.2950\\
0.01075   -2.6996\\
0.011   -3.3566\\
0.01125   -4.8399\\
};
\addlegendentry{\scriptsize Rob.~DD-Ana};
\addplot [only marks,color=red,mark size=4pt,mark=x,mark options={solid},forget plot]
  table[row sep=crcr]{%
	0.00725 -2.9\\
	0.0075 -2.9\\
	0.00775 -2.9\\
	0.0115 -2.9\\
	0.01175 -2.9\\
	0.012 -2.9\\
};
\end{axis}
\end{tikzpicture}%
\caption{Lower bound on the input-feedforward passivity index of the two-tank system from noisy input-state trajectories for different bounds on the measurement noise $\bar{w}$.  
The blue dots \textcolor{mycolor1}{$\bullet$} are the computed lower bounds by the introduced robust data-driven analysis, the red crosses \textcolor{purple}{$\times$} indicate that no lower bound could be found and the differently colored lines indicate the results of consecutive system identification and systems analysis as indicated by the legend.
}
\label{fig:wt2}
\end{figure}

We can see that for an assumed noise bound of $\bar{w} \leq 0.0077$, neither an operator gain nor an input feedforward passivity index can be found. This implies that there exists no LTI system which admits any $\mathcal{L}_2$-gain $\gamma > 0$ or feedforward passivity index $\rho \in\mathbb{R}$, respectively, that is consistent with the data assuming such a low noise level. For an assumed noise level of $\bar{w} = 0.008$ the proposed approach yields an operator gain of $\gamma = 7.92$ and an input feedforward passivity index of $\rho = -0.99$. 
When comparing these values to our baseline \textit{ParamEst}, which uses physical insights together with parameter estimation, the results are more conservative. This is to be expected as the resulting upper (or lower) bound is the system property that is guaranteed for all systems consistent with the data and the assumed noise model. In particular, we note that the approaches based on system identification methods do not provide any theoretical guarantees on the actual system property satisfied by the two-tank system. The chosen noise level of course highly influences the result, as indicated also in Fig.~\ref{fig:wt} and Fig.~\ref{fig:wt2}. However, as too small noise levels lead to infeasibility, the data already implicitly reveals a reasonable interval for the noise bound. 
Naturally, for larger assumed noise bounds, the estimated bound on the system property becomes more conservative, since the set of systems that are consistent with the data and the noise bound increases. Finally, we can also see that for noise bounds larger than $\bar{w} \geq 0.0115$ no bounds on the respective system properties can be found anymore. 

Altogether, the presented results show the potential and applicability of the introduced approach for real experimental measurements. For a given noise level, the results provide provably robust bounds on dissipativity properties over the infinite horizon for all systems consistent with the data. By varying the assumed noise bound, our framework allows for an intuitive trade-off between accuracy of the estimated system property and robustness (i.e., the size of the set of systems for which the property is guaranteed).

\section{Conclusion and Outlook}
In this work, we introduced simple verification methods of dissipativity properties with guarantees from (noisy) input-state and input-output data based on LMIs. 
While in \cite{Koch2020a} the general ideas were presented to determine dissipativity properties that hold over the infinite horizon from finite input and state trajectories, the guarantees given in this paper for noisy input-state data are non-conservative and computationally less expensive.
Furthermore, we introduced approaches to verify dissipativity from input-output data both with and without noise.

Future work includes the problem of improving the conditions for dissipativity properties from noisy input-output data. 
Furthermore, it might be interesting for future work to investigate how existing works on data-driven descriptions of nonlinear systems (Hammerstein and Wiener systems, second-order Volterra systems, bilinear systems, and polynomial systems) as presented in \cite{Berberich2019c,Rueda2020,Bisoffi2020,Guo2020}, respectively, can be applied to verify and find dissipativity properties of unknown nonlinear systems from data.
\appendix[Proof of Lem.~\ref{lem:extended}]
\begin{proof}
The input-output behavior of the system $G$ in \eqref{eq:sys} over $l$ steps can be written as \eqref{eq:input-output},
\begin{figure*}
\vspace{2pt}
\begin{align}
\begin{pmatrix} y_{k-l} \\ y_{k-l+1} \\ \vdots \\ y_{k-1} \end{pmatrix} = \underbrace{\begin{pmatrix} C \\ CA \\ \vdots \\ CA^{l-1} \end{pmatrix}}_{\mathcal{O}_l} x_{k-l} + 
\underbrace{\begin{pmatrix} 
D & 0 & & \dots & 0 & 0 \\
CB & D & & \dots & 0 & 0 \\
\vdots & \ddots & \ddots&\ddots&\ddots&\vdots \\
CA^{l-2}B & CA^{l-3}B &\dots & CAB & CB &D
\end{pmatrix} }_R
\begin{pmatrix} u_{k-l} \\ u_{k-l+1} \\ \vdots \\ u_{k-1} \end{pmatrix}
\label{eq:input-output}
\end{align}
\begin{align}
\begin{split}
\label{eq:extended_sys}
\underbrace{\begin{pmatrix} u_{k-l+1} \\ \vdots \\ u_{k-1} \\ u_{k} \\ y_{k-l+1} \\ \vdots \\ y_{k-1} \\ y_{k}  \end{pmatrix}}_{\xi_{k+1}} = \underbrace{\left(\begin{pmatrix} 
0 & I & \dots & 0 & 0 &  0& \dots &0 \\
\vdots & \ddots & \ddots&\ddots& \vdots & & \ddots&\vdots \\
0 & 0 & \dots & I & 0& 0&  \dots & 0 \\
0 & 0 & \dots & 0 & 0 & 0&  \dots & 0 \\
0 & 0 & \dots & 0 & 0& I& \dots &0 \\
\vdots & \ddots & \ddots&\vdots&\vdots & & \ddots&\vdots \\
0 & 0 & \dots & 0 & 0& 0&  \dots & I \\
CA^{l-1}B & \dots & \dots & CB & 0 & 0 & \dots &  0 \\
\end{pmatrix} + \begin{pmatrix} 0 \\ \vdots \\ 0\\ 0 \\ 0 \\ \vdots \\ 0 \\ C A^l T \end{pmatrix}\right) }_{\widetilde{A}} \underbrace{\begin{pmatrix} u_{k-l} \\ u_{k-l+1} \\ \vdots \\ u_{k-1} \\ y_{k-l} \\ y_{k-l+1} \\ \vdots \\ y_{k-1} \end{pmatrix}}_{\xi_{k}} + \underbrace{\begin{pmatrix} 0 \\ \vdots \\ 0 \\ I \\ 0 \\ \vdots \\ 0 \\ D  \end{pmatrix}}_{\widetilde{B}} u_{k} \\
y_k = \underbrace{\begin{pmatrix} 0 & \dots & 0 & I \end{pmatrix} \tilde{A}}_{\tilde{C}} \xi_k + \underbrace{D}_{\tilde{D}} u_k
\end{split} 
\end{align}
\noindent\makebox[\linewidth]{\rule{\textwidth}{0.4pt}}
\end{figure*}
which yields with the introduced matrix notation
\begin{align*}
\begin{pmatrix} 
-R & I
\end{pmatrix} \xi_k = \mathcal{O}_l x_k.
\end{align*}
Using the definition of the lag, we know that $\mathcal{O}_l$ has full column rank, and hence, there exists a left-inverse $\mathcal{O}_l^{-1}$ (which has full row rank) such that
\begin{align}
\underbrace{\mathcal{O}_l^{-1} \begin{pmatrix} 
-R & I
\end{pmatrix}}_{T} \xi_k = x_k.
\label{eq:trafo}
\end{align}

The general system description from \eqref{eq:sys} yields
\begin{align*}
y_{k} = C A^l x_{k-l} + \begin{pmatrix} CA^{l-1} B & \dots & CB \end{pmatrix} \begin{pmatrix} u_{k-l} \\ \vdots \\ u_{k-1} \end{pmatrix}.
\end{align*}
Together with $T$ as defined in \eqref{eq:trafo}, 
this leads to \eqref{eq:extended_sys}. This proves that $(\tilde{A}, \tilde{B}, \tilde{C}, \tilde{D})$ can explain the input-output trajectory. 
\end{proof} 
\section*{Acknowledgment}
The authors thank Said Jamal Mohamad for his help in performing the experiments with the two-tank system.
\ifCLASSOPTIONcaptionsoff
  \newpage
\fi
\bibliographystyle{IEEEtran}
\bibliography{../bib_all}

\end{document}